\let\counterwithin\relax  
\newcommand{\sumi}{\sum_{i=1}^\infty}
\newcommand{\prodi}{\prod_{i=1}^\infty}
\definecolor{dark-gray}{gray}{0.3}
\definecolor{dkgray}{rgb}{.4,.4,.4}
\definecolor{dkblue}{rgb}{0,0,.5}
\definecolor{medblue}{rgb}{0,0,.75}
\definecolor{rust}{rgb}{0.5,0.1,0.1}
\newtheoremstyle{myThm} 
    {\topsep}                    
    {\topsep}                    
    {\itshape}                   
    {}                           
    {\sffamily\bfseries}                   
    {.}                          
    {.5em}                       
    {}  
\newtheoremstyle{myRem} 
    {\topsep}                    
    {\topsep}                    
    {}                   
    {}                           
    {\sffamily}                   
    {.}                          
    {.5em}                       
    {}  
\newtheoremstyle{myDef} 
    {\topsep}                    
    {\topsep}                    
    {}                   
    {}                           
    {\sffamily\bfseries}                   
    {.}                          
    {.5em}                       
    {}  
\theoremstyle{myThm}
\newtheorem{theorem}{Theorem}[section]
\newtheorem{lemma}[theorem]{Lemma}
\newtheorem{proposition}[theorem]{Proposition}
\newtheorem{corollary}[theorem]{Corollary}
\theoremstyle{myRem}
\newtheorem{remark}[theorem]{Remark}
\theoremstyle{myDef}
 \newtheorem{example}[theorem]{Example}
\let\originalleft\left
\let\originalright\right
\renewcommand{\left}{\mathopen{}\mathclose\bgroup\originalleft}
\renewcommand{\right}{\aftergroup\egroup\originalright}
\newcommand{\A}{A}
\newcommand{\muis}{\mu}
\renewcommand{\phi}{\varphi}
\newcommand{\logeq}{\mathrel{\Leftrightarrow}}
\providecommand{\mathbbm}{\mathbb} 
\newcommand{\R}{\mathbbm{R}}
\newcommand{\N}{\mathbbm{N}}
\newcommand{\bigO}{\mathcal{O}}
\newcommand{\E}{\mathbb{E}}
\definecolor{mygreen}{rgb}{0.1,0.75,0.2}
\newcommand{\post}{\pi}
\newcommand{\Prob}{\operatorname{\mathbbm{P}}}
\newcommand{\Expect}{\operatorname{\mathbb{E}}}
\newcommand{\V}{\operatorname{\mathbb{V}}}
\newcommand{\g}{\mathsf{g}}
\newcommand{\As}{A_{\mbox {\tiny{\rm std}}}}
\newcommand{\Ao}{A_{\mbox {\tiny{\rm opt}}}}
\newcommand{\Gammas}{\Gamma_{\mbox {\tiny{\rm std}}}}
\newcommand{\Gammao}{\Gamma_{\mbox {\tiny{\rm opt}}}}
\newcommand{\pis}{\pi_{\mbox {\tiny{\rm std}}}}
\newcommand{\pio}{\pi_{\mbox {\tiny{\rm opt}}}}
\newcommand{\mus}{\mu_{\mbox {\tiny{\rm std}}}}
\newcommand{\muo}{\mu_{\mbox {\tiny{\rm opt}}}}
\newcommand{\Ks}{K_{\mbox {\tiny{\rm std}}}}
\newcommand{\Ko}{K_{\mbox {\tiny{\rm opt}}}}
\newcommand{\Ss}{S_{\mbox {\tiny{\rm std}}}}
\newcommand{\So}{S_{\mbox {\tiny{\rm opt}}}}
\newcommand{\lambdas}{\lambda_{\mbox {\tiny{\rm std}}}}
\newcommand{\lambdao}{\lambda_{\mbox {\tiny{\rm opt}}}}
\newcommand{\rhos}{\rho_{\mbox {\tiny{\rm std}}}}
\newcommand{\rhoo}{\rho_{\mbox {\tiny{\rm opt}}}}
\newcommand{\Sigmas}{\Sigma_{\mbox {\tiny{\rm std}}}}
\newcommand{\Sigmao}{\Sigma_{\mbox {\tiny{\rm opt}}}}
\newcommand{\dchi}{d_{\mbox {\tiny{$ \chi^2$}}}}
\newcommand{\Nc}{\mathcal{N}}
\title{Bayesian Update with Importance Sampling: \\ Required Sample Size} 
\author{Daniel Sanz-Alonso and Zijian Wang}
\date{University of Chicago}
\makeatletter\@addtoreset{section}{part}\makeatother%
\numberwithin{equation}{section}
\newcommand{\upperRomannumeral}[1]{\uppercase\expandafter{\romannumeral#1}}
\begin{document}
\maketitle 

\begin{abstract}

Importance sampling is used to approximate Bayes' rule in many computational approaches to Bayesian inverse problems, data assimilation and machine learning. This paper reviews and further investigates the required sample size for importance sampling in terms of the $\chi^2$-divergence between target and proposal. We develop general abstract theory and illustrate through numerous examples the roles that dimension, noise-level and other model parameters play in approximating the Bayesian update with importance sampling. Our examples also facilitate a new direct comparison of standard and optimal proposals for particle filtering. 
\end{abstract}

\section{Introduction}
Importance sampling is a mechanism to approximate expectations with respect to a \emph{target} distribution using independent weighted samples from a \emph{proposal} distribution. The variance of the weights  ---quantified by the $\chi^2$-divergence between target and proposal--- gives both necessary and sufficient conditions on the sample size to achieve a desired worst-case error over large classes of test functions. This paper contributes to the understanding of importance sampling to approximate the Bayesian update, where the target is a posterior distribution obtained by conditioning the proposal to observed data. We consider illustrative examples where the $\chi^2$-divergence between target and proposal admits a closed formula and it is hence possible to characterize explicitly the required sample size. These examples showcase the fundamental challenges that importance sampling encounters in high dimension and small noise regimes where target and proposal are far apart. They also facilitate a direct comparison of standard and optimal proposals for particle filtering. 

We denote the target distribution by $\mu$ and the proposal by $\pi$ and assume that both are probability distributions in Euclidean space $\R^d$. We further suppose that the target is absolutely continuous with respect to the proposal, and denote by $g$ the \emph{unnormalized} density between target and proposal so that, for any suitable test function $\varphi,$ 
\begin{align}\label{eq:abscont}
\int_{\R^d} \varphi(u) \mu(du) = \frac{\int_{\R^d}\varphi(u) g(u)  \pi(du)}{\int_{\R^d} g(u) \pi(du)}.
\end{align}
We write this succinctly as $\mu( \varphi) = \pi( \varphi g)/\pi(g).$ Importance sampling approximates $\mu(\varphi)$ using independent samples $\{u^{(n)} \}_{n=1}^N $   from the proposal $\pi, $ computing the numerator and denominator in \eqref{eq:abscont} by Monte Carlo integration,
\begin{align}\label{eq:IS}
\begin{split}
\mu(\phi) &\approx  \frac{\frac1N \sum_{n=1}^N  \varphi(u^{(n)}) g(u^{(n)})}{ \frac{1}{N} \sum_{n=1}^N g(u^{(n)})  } \\
&=\sum_{n =1}^N w^{(n)} \varphi(u^{(n)}), \quad \quad w^{(n)} := \frac{ g(u^{(n)})}  { \sum_{\ell=1}^N g(u^{(\ell)})}.
\end{split}
\end{align}
The weights $w^{(n)}$  ---called \emph{autonormalized} or self-normalized since they add up to one---  can be computed as long as the unnormalized density $g$ can be evaluated point-wise; knowledge of the normalizing constant $\pi(g)$ is not needed. We write \eqref{eq:IS} briefly as $\mu(\phi) \approx \mu^N(\phi)$, where $\mu^N$ is the \emph{random} autonormalized particle approximation measure
 \begin{equation}\label{eq:particleapproxmeasure}
\muis^N := \sum_{n=1}^N w^{(n)} \delta_{u^{(n)}}, \quad \quad u^{(n)} \stackrel{\text{i.i.d.}}{\sim} \pi.
\end{equation}

This paper is concerned with the study of importance sampling in Bayesian formulations to inverse problems, data assimilation and machine learning tasks \cite{agapiou2017importance,sanzstuarttaeb,barber2012bayesian,trillos2020consistency,trillos2018bayesian}, where the relationship $\mu(du) \propto g(u) \pi(du)$ arises from application of Bayes' rule $\Prob(u|y) \propto \Prob(y|u) \Prob(u)$; 
we  interpret $u \in \R^d$ as a parameter of interest,  $\pi \equiv \Prob(u)$ as a prior distribution on $u$, $g(u) \equiv g(u;y) \equiv \Prob(y | u)$ as a likelihood function which tacitly depends on observed data $y \in \R^k$, and $\mu \equiv \Prob(u|y)$ as the posterior distribution of $u$ given $y.$  With this interpretation and terminology, the goal of importance sampling is to approximate posterior expectations using prior samples. Since the prior has fatter tails than the posterior, the Bayesian setting poses further structure into the analysis of importance sampling. In addition, there are several specific features of the application of importance sampling in Bayesian inverse problems, data assimilation and machine learning that shape our presentation and results. 

First, Bayesian formulations have the potential to provide uncertainty quantification by computing \emph{several} posterior quantiles. This motivates considering a worst-case error analysis \cite{dick2013high} of importance sampling over large classes of test functions $\varphi$ or, equivalently, bounding a certain distance between the random particle approximation measure $\mu^N$ and the target $\mu,$ see \cite{agapiou2017importance}.
As we will review in Section \ref{sec:2}, a key quantity in controlling the error of importance sampling with bounded test functions is the $\chi^2$-divergence between target and proposal, given by
$$\dchi(\mu \| \pi) = \frac{\pi(g^2)}{ \pi(g)^2} - 1.$$

Second, importance sampling in inverse problems, data assimilation and machine learning applications is often used as a building block of more sophisticated computational methods, and in such a case there may be little or no freedom in the choice of proposal. For this reason, throughout this paper we view both target and proposal as given and we focus on investigating the required sample size for accurate importance sampling with bounded test functions, following a similar perspective as \cite{CP15,agapiou2017importance,sanz2018importance}.
The complementary question of how to choose the proposal to achieve a small variance for a given test function is not considered here. This latter question is of central interest in the simulation of rare events \cite{rubino2009rare} and has been widely studied since the introduction of importance sampling in \cite{kahn1953methods,kahn1955use},  leading to a plethora of adaptive importance sampling schemes \cite{bugallo2017adaptive}. 

Third, high dimensional and small noise settings are standard in inverse problems, data assimilation and machine learning, and it is essential to understand the scalability of sampling algorithms in these challenging regimes. The curse of dimension of importance sampling has been extensively investigated \cite{BBL08,BLB08,snyder2008obstacles,RH13,chorin2013conditions,agapiou2017importance}. The early works \cite{BBL08,BLB08} demonstrated a \emph{weight collapse} phenomenon, by which unless the number of samples is scaled exponentially with the dimension of the parameter, the maximum weight converges to one. 
The paper \cite{agapiou2017importance} also considered small noise limits and further emphasized the need to define precisely the dimension of learning problems. Indeed, while many inverse problems, data assimilation models and machine learning tasks are defined in terms of millions of parameters, their intrinsic dimension is often substantially lower since $(i)$ all parameters are typically not equally important; $(ii)$ substantial a priori information about some parameters may be available; and  $(iii)$ the data may be lower dimensional than the parameter space. Here we will provide a unified and accessible understanding of the roles that dimension, noise-level and other model parameters play in approximating the Bayesian update. We will do so through examples where it is possible to compute explicitly the $\chi^2$-divergence between target and proposal, and hence the required sample size. 

Finally, in the Bayesian context the normalizing constant $\pi(g)$ represents the marginal likelihood and is often computationally intractable. This motivates our focus on the  \emph{auto-normalized} importance sampling estimator in \eqref{eq:IS}, which estimates \emph{both} $\pi(g\varphi)$ and $\pi(g)$ using Monte Carlo integration, as opposed to unnormalized variants of importance sampling 
\cite{sanz2018importance}.

\subsection*{Main Goals, Specific Contributions and Outline}
The main goal of this paper is to provide a rich and unified understanding of the use of importance sampling to approximate the Bayesian update, while keeping the presentation accessible to a large audience. In Section \ref{sec:2} we investigate the required sample size for importance sampling in terms of the $\chi^2$-divergence between target and proposal. Section \ref{sec:3}  builds on the results in Section \ref{sec:2} to illustrate through numerous examples the fundamental challenges that importance sampling encounters when approximating the Bayesian update in small noise and high dimensional settings. In Section \ref{sec:4} we show how our concrete examples facilitate a new direct comparison of standard and optimal proposals for particle filtering. These examples also allow us to identify model problems where the advantage of the optimal proposal over the standard one can be dramatic. 

Next, we provide further details on the specific contributions of each section and link them to the literature. We refer to \cite{agapiou2017importance} for a more exhaustive literature review. 
\begin{itemize}
\item Section \ref{sec:2} provides a unified perspective on the sufficiency and necessity of having a sample size of the order of the $\chi^2$-divergence between target and proposal to guarantee accurate importance sampling with bounded test functions. 
Our analysis and presentation are informed by the specific features that shape the use of importance sampling to approximate Bayes' rule.
The key role of the second moment of the $\chi^2$-divergence has long been acknowledged \cite{liu1996metropolized,pitt1999filtering}, and it is intimately related to an effective sample size used by practitioners to monitor the performance of importance sampling \cite{kong1992note,kong1994sequential}. A topic of recent interest is the development of adaptive  importance sampling schemes where the proposal is chosen by minimizing  ---over some admissible family of distributions--- the $\chi^2$-divergence with respect to the target \cite{ryu2014adaptive,akyildiz2019convergence}.  The main original contributions of Section \ref{sec:2}  are Proposition \ref{prop:necessary sample size} and Theorem \ref{thm:nece suff summary}, which demonstrate the \emph{necessity} of suitably increasing the sample size with the $\chi^2$-divergence along singular limit regimes. The idea of Proposition \ref{prop:necessary sample size} is inspired by \cite{CP15}, but adapted here from relative entropy to $\chi^2$-divergence. Our results  complement sufficient conditions on the sample size derived in \cite{agapiou2017importance} and necessary conditions for \emph{unnormalized} (as opposed to autonormalized) importance sampling in
 \cite{sanz2018importance}. 
\item In Section \ref{sec:3}, Proposition \ref{prop:inverse_problem_chi2} gives a closed formula for the $\chi^2$-divergence between posterior and prior in a linear-Gaussian Bayesian inverse problem setting. This formula allows us to investigate the scaling of the $\chi^2$-divergence (and thereby the rate at which the sample size needs to grow) in several singular limit regimes, including small observation noise, large prior covariance and large dimension. Numerical examples motivate and complement the theoretical results. In an infinite dimensional setting, Corollary \ref{cor: dimensionality 3 equivalence} 
establishes an equivalence between absolute continuity, finite $\chi^2$-divergence and finite intrinsic dimension. A similar result was proved in more generality in \cite{agapiou2017importance} using the advanced theory of Gaussian measures in Hilbert space \cite{bogachev1998gaussian}; our presentation and proof here are elementary, while still giving the same degree of understanding. 
\item In Section \ref{sec:4} we follow  \cite{BBL08,BLB08,snyder2008obstacles,snyder2015performance,agapiou2017importance} and investigate the use of importance sampling to approximate Bayes' rule within one filtering step in a linear-Gaussian setting. We build on the examples and results in Section \ref{sec:3} to identify model regimes where the performance of standard and optimal proposals can be dramatically different. We refer to  \cite{doucet2001introduction,sanzstuarttaeb} for an introduction to standard and optimal proposals for particle filtering, and to \cite{del2004feynman} for a more advanced presentation. The main original contribution of this section is Theorem \ref{prop: part filt chi2}, which gives a direct comparison of the $\chi^2$-divergence between target and standard/optimal proposals. This result improves on \cite{agapiou2017importance}, where only a comparison between the intrinsic dimension was established. 
\end{itemize}


\section{Importance Sampling and \texorpdfstring{$\chi^2$}{chi2}-divergence}\label{sec:2}
The aim of this section is to demonstrate the central role of the $\chi^2$-divergence between target and proposal in determining the accuracy of importance sampling. In Subsection \ref{ssec:sufficientnecessary} we show how the $\chi^2$-divergence arises in both sufficient and necessary conditions on the sample size for accurate importance sampling with bounded test functions.  Subsection \ref{ssec:ESS} describes a well-known connection between the effective sample size and the $\chi^2$-divergence.  Our investigation of importance sampling to approximate the Bayesian update ---developed in Sections \ref{sec:3} and \ref{sec:4}--- will make use of a closed formula for the $\chi^2$-divergence between Gaussians, which we include in Subsection \ref{ssec:chiGaussians} for later reference.

\subsection{Sufficient and Necessary Sample Size}\label{ssec:sufficientnecessary}
Here we provide general sufficient and necessary conditions on the sample size in terms of 
$$\rho := \dchi(\mu \|\pi)+1.$$
We first review upper-bounds on the worst-case bias and mean-squared error of importance sampling with bounded test functions, which imply that accurate importance sampling is guaranteed if $N \gg \rho$. The proofs can be found in \cite{agapiou2017importance,sanzstuarttaeb} and are therefore omitted. 
	\begin{proposition}[Sufficient Sample Size]\label{prop:sufficient sample size}
		It holds that
				\begin{align*}
				\sup_{|\phi|_\infty\leq 1} \Bigl|\Expect\left[\muis^N(\phi)-\mu(\phi)\right] \Bigr| &\leq\frac{4}{N} \rho, \\  
				\sup_{|\phi|_\infty\leq 1}\Expect \left[ \bigl(\muis^N(\phi)-\mu(\phi)\bigr)^2 \right] &\leq\frac{4}{N} \rho.
				\end{align*}
	\end{proposition}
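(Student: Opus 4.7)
The plan is to start by introducing the normalized Radon--Nikodym derivative $h := g/\pi(g)$, so that $d\mu/d\pi = h$, $\pi(h) = 1$, and $\pi(h^2) = \rho$. Letting $\pi^N := \frac{1}{N}\sum_{n=1}^N \delta_{u^{(n)}}$ denote the empirical measure of the proposal samples, the self-normalized estimator rewrites cleanly as $\muis^N(\phi) = \pi^N(h\phi)/\pi^N(h)$. I would then pass to the centered test function $\tilde\phi := \phi - \mu(\phi)$. Because both $\muis^N$ and $\mu$ are probability measures, $\muis^N(\phi) - \mu(\phi) = \muis^N(\tilde\phi)$; moreover $|\tilde\phi|_\infty \leq 2$ and $\pi(h\tilde\phi) = \mu(\tilde\phi) = 0$. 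This reduction is essential because it recasts the self-normalized error in terms of mean-zero linear statistics of the iid samples.

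The central algebraic identity, obtained by clearing the denominator in $\muis^N(\tilde\phi) = \pi^N(h\tilde\phi)/\pi^N(h)$, is
$$\muis^N(\tilde\phi) \;=\; \pi^N(h\tilde\phi) \;-\; \muis^N(\tilde\phi)\bigl(\pi^N(h) - 1\bigr).$$
For the bias bound, I would substitute this identity into its own right-hand side once to obtain
$$\muis^N(\tilde\phi) \;=\; \pi^N(h\tilde\phi)\bigl(2 - \pi^N(h)\bigr) \;+\; \muis^N(\tilde\phi)\bigl(\pi^N(h) - 1\bigr)^2.$$
Taking expectations eliminates the leading-order contribution, since $\Expect\bigl[\pi^N(h\tilde\phi)\bigr] = \pi(h\tilde\phi) = 0$. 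What remains splits into (i) a cross term equal to $-\text{Cov}\bigl(\pi^N(h\tilde\phi),\pi^N(h)\bigr) = -\frac{1}{N}\pi(h^2\tilde\phi)$, whose magnitude is at most $2\rho/N$ using $|\tilde\phi|_\infty \leq 2$ and $\pi(h^2) = \rho$; and (ii) a residual $\Expect[\muis^N(\tilde\phi)(\pi^N(h)-1)^2]$, bounded by $2\,\V(\pi^N(h)) = 2(\rho - 1)/N$ via the pointwise estimate $|\muis^N(\tilde\phi)| \leq 2$. Summing and using $\rho \geq 1$ produces the advertised $4\rho/N$ bias bound.

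For the mean-squared error, squaring the same identity and applying Young's inequality yields $(\muis^N(\tilde\phi))^2 \lesssim (\pi^N(h\tilde\phi))^2 + (\muis^N(\tilde\phi))^2(\pi^N(h) - 1)^2$. The first term has expectation equal to the variance of an average of iid random variables, and is therefore bounded by $\frac{1}{N}\pi(h^2\tilde\phi^2) \leq \frac{4\rho}{N}$; the second is controlled using $|\muis^N(\tilde\phi)| \leq 2$ together with $\V(\pi^N(h)) = (\rho - 1)/N$. The main obstacle is to recover the sharp prefactor $4$: a naive $(a+b)^2 \leq 2a^2 + 2b^2$ expansion inflates the constant, and tightening it would likely require either partitioning the probability space according to the event $\{\pi^N(h) \geq 1/2\}$ (with a Chebyshev tail bound on the complement), or more carefully exploiting the cancellations in the exact identity rather than in its quadratic relaxation. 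Either way the dominant scaling is the intended $\mathcal{O}(\rho/N)$, matching the first bound.
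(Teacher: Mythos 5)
The paper does not actually contain a proof of this proposition---it states that the proofs ``can be found in \cite{agapiou2017importance,sanzstuarttaeb} and are therefore omitted''---so your argument has to be judged on its own terms. Your proof of the \emph{bias} bound is correct and complete: the iterated identity, the covariance computation giving the cross term $-\frac{1}{N}\pi(h^2\tilde\phi)$ of magnitude at most $2\rho/N$, and the bound $2\V(\pi^N(h))=2(\rho-1)/N$ on the residual all check out, and you in fact obtain $(4\rho-2)/N\le 4\rho/N$.

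The mean-squared-error bound, however, is not established as stated: as you yourself concede, Young's inequality applied to your centered decomposition yields a constant of roughly $16$ rather than the asserted $4$. The fix needs neither a partition on $\{\pi^N(h)\ge 1/2\}$ nor a tail bound---it is a regrouping of your own identity. Using $\pi(h\phi)=\mu(\phi)$, one has
\begin{equation*}
\mu^N(\phi)-\mu(\phi)\;=\;\bigl[\pi^N(h\phi)-\pi(h\phi)\bigr]\;-\;\mu^N(\phi)\bigl[\pi^N(h)-1\bigr],
\end{equation*}
which is algebraically the same as your identity but keeps $\phi$ uncentered in the linear term and absorbs $\mu(\phi)$ into the prefactor of the second term; consequently every factor is bounded by $1$ rather than $2$, since $|\mu^N(\phi)|\le 1$, $\V(h\phi)\le\pi(h^2\phi^2)\le\rho$ and $\V(h)\le\rho$. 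Applying the triangle inequality in $L^2(\Prob)$ (Minkowski, not Young) gives
\begin{equation*}
\Bigl(\Expect\bigl[(\mu^N(\phi)-\mu(\phi))^2\bigr]\Bigr)^{1/2}\;\le\;\sqrt{\tfrac{\V(h\phi)}{N}}+\sqrt{\tfrac{\V(h)}{N}}\;\le\;2\sqrt{\tfrac{\rho}{N}},
\end{equation*}
and squaring yields exactly $4\rho/N$. The factor-of-four loss in your version comes from two independent doublings: centering (which doubles each $L^2$ norm) and squaring term by term (which doubles again relative to Minkowski). As a bonus, the same decomposition gives the bias bound in one line: $\Expect[\mu^N(\phi)-\mu(\phi)]=-\Expect[(\mu^N(\phi)-\mu(\phi))(\pi^N(h)-1)]$, which by Cauchy--Schwarz is at most $2\sqrt{\rho/N}\cdot\sqrt{(\rho-1)/N}\le 2\rho/N$.
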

	
The next result shows the existence of bounded test functions for which the error may be large with a high probability if $N \ll \rho.$ The idea is taken from \cite{CP15}, but we adapt it here to obtain a result in terms of the $\chi^2$-divergence rather than relative entropy. We denote by $\g :=g/\pi(g)$ the \emph{normalized} density between $\mu$ and $\pi,$ and note that $\rho=\pi(\g^2)=\mu(\g).$
	\begin{proposition}[Necessary Sample Size]\label{prop:necessary sample size}
	Let $U \sim \mu.$ For any $N\ge 1$ and $\alpha \in (0,1) $ there exists a test function $\phi$ with $|\phi|_\infty \le 1$ such that 
	\begin{equation}
	\Prob\Bigl(   | \muis^N(\phi) - \mu(\phi) | = \Prob( \g(U) > \alpha \rho \bigr) \Bigr)  \ge 1 - \frac{N} {\alpha\rho}.
	\end{equation}
    \end{proposition}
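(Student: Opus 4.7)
The plan is to adapt the Chatterjee--Diaconis style argument from \cite{CP15}, replacing their log-density quantile by a tail event of the normalized density $\g := g/\pi(g)$. For any $\alpha \in (0,1)$, I would take the indicator test function
$$\phi(u) := \1{\g(u) > \alpha \rho},$$
which satisfies $|\phi|_\infty \le 1$. By the Bayes change of measure, $\mu(\phi) = \Prob(\g(U) > \alpha \rho)$ with $U \sim \mu$, so this expectation already equals the deterministic real number appearing on the right-hand side inside the outer probability.

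Next I would show that the particle estimator $\muis^N(\phi)$ vanishes on a high-probability event over the proposal samples. Since $\pi(\g) = 1$, Markov's inequality applied to the non-negative random variable $\g(u^{(n)})$ under $\pi$ gives
$$\pi\bigl(\g > \alpha \rho\bigr) \le \frac{1}{\alpha \rho}.$$
Taking a union bound over the $N$ i.i.d.\ draws $u^{(1)}, \ldots, u^{(N)} \sim \pi$ then yields
$$\Prob\Bigl(\phi(u^{(n)}) = 0 \text{ for all } n = 1, \ldots, N\Bigr) \ge 1 - \frac{N}{\alpha \rho}.$$
On this event every term of $\muis^N(\phi) = \sum_{n=1}^N w^{(n)} \phi(u^{(n)})$ vanishes, regardless of the (still valid) weights $w^{(n)}$, so $\muis^N(\phi) = 0$ and consequently $|\muis^N(\phi) - \mu(\phi)| = \mu(\phi) = \Prob(\g(U) > \alpha \rho)$, as required.

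There is no substantive obstacle; the proof is clean once the right test function is in hand. The only subtle point is the normalization of $\g$: applying Markov to $g$ itself would produce a threshold that depends on the unknown normalizing constant $\pi(g)$, whereas working with $\g$ places the threshold $\alpha \rho$ naturally as a fraction of $\mu(\g) = \pi(\g^2) = \rho$, i.e.\ of the mean of $\g$ under the target. This is precisely the scale at which the importance weights become atypically large, and it explains why the resulting lower bound is informative exactly in the regime $N \ll \rho$, thereby complementing the sufficient condition of Proposition \ref{prop:sufficient sample size}.
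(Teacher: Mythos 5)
Your proof is correct and follows essentially the same route as the paper: the paper uses the complementary indicator $\phi(u)=\mathbbm{1}\{\g(u)\leq\alpha\rho\}$, but the analyzed event (all $N$ samples satisfy $\g(u^{(n)})\leq\alpha\rho$), the Markov bound $\pi(\g>\alpha\rho)\leq 1/(\alpha\rho)$ from $\pi(\g)=1$, and the union bound are identical, and since $|\mu^N(\phi)-\mu(\phi)|$ is unchanged under $\phi\mapsto 1-\phi$ the two arguments coincide. Your closing remark on why the normalization of $\g$ matters is a nice addition but not a mathematical difference.
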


    \begin{proof}
    Observe that for the test function $\phi(u) :=\mathbbm{1}\{\g(u)\leq\alpha\rho\}$, we have $\mu(\phi) = \Prob\bigl( \g(U) \le \alpha \rho\bigr).$ On the other hand, $\mu^N(\phi)=1$ if and only if $\g(u^{(n)})\leq\alpha\rho$ for all $1 \le n \le N$. This implies that
        \begin{align}\label{eq:boundnec}
        \Prob\left(|\mu^N(\phi)-\mu(\phi)|
        =\Prob(\g(U)>\alpha\rho)\right)
        \geq 1-N\Prob( \g(u^{(1)})>\alpha\rho)
        \geq 1-\frac{N}{\alpha\rho}.
        \end{align}
    \end{proof}
    
The power of Proposition \ref{prop:necessary sample size} is due to the fact that in some singular limit regimes the distribution of $\g(U)$ concentrates around its expected value $\rho.$ In such a case, for any fixed $\alpha\in(0,1)$ the probability of the event $\g(U) > \alpha \rho$ will not vanish as the singular limit is approached.  This idea will become clear in the proof of Theorem \ref{thm:nece suff summary} below. 
    
    In Sections \ref{sec:3} and \ref{sec:4} we will investigate the required sample size for importance sampling approximation of the Bayesian update in various singular limits, where target and proposal become further apart as a result of reducing the observation noise, increasing the prior uncertainty, or increasing the dimension of the problem. To formalize the discussion in a general abstract setting, let $\{ (\mu_\theta, \pi_\theta)\}_{\theta>0}$ be a family of targets and proposals such that $\rho_\theta:= \dchi(\mu_\theta \| \pi_\theta) \to \infty$ as $\theta\to \infty.$ The parameter $\theta$ may represent for instance the size of the precision of the observation noise, the size of the prior covariance, or a suitable notion of dimension. Our next result shows a clear dichotomy in the performance of importance sampling along the singular limit depending on whether  the sample size grows sublinearly or superlinearly with $\rho_\theta.$

	\begin{theorem}\label{thm:nece suff summary}
	Suppose that $\rho_\theta \to \infty$ and that $\mathcal{V}:=\sup_{\theta}\frac{\V [\g_\theta(U_\theta)]}{\rho_\theta^2}<1.$
	Let $\delta >0.$ 
	\begin{enumerate}[(i)]
	\item If $N_\theta = \rho_\theta^{1 + \delta},$ then 
    	\begin{equation}\label{eq:lim var}
    	\lim_{\theta \to \infty} \sup_{|\phi|_\infty\leq 1}\Expect \bigl[ \bigl(\mu_\theta^{N_\theta}(\phi)-\mu_\theta(\phi)\bigr)^2 \bigr]  = 0.
    	\end{equation}
    \item  If $N_\theta = \rho_\theta^{1 - \delta}$, then there exists a fixed $c\in(0,1)$ such that
    	\begin{equation}\label{eq:boundlargetheta}
       \lim_{\theta \to \infty} \sup_{|\phi|_\infty \le 1} \Prob\Bigl( | \mu_\theta^{N_\theta}(\phi) - \mu_\theta(\phi) | > c    \Bigr)  = 1.
    	\end{equation}
	\end{enumerate}
	\end{theorem}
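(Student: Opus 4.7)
My plan is to handle the two parts separately, each by a direct application of one of the sample-size bounds stated earlier in this subsection, combined with a short concentration argument that exploits the hypothesis $\mathcal{V}<1$. Part (i) is essentially a one-line calculation: Proposition \ref{prop:sufficient sample size} directly yields
\[
\sup_{|\phi|_\infty\le 1}\Expect\bigl[(\mu_\theta^{N_\theta}(\phi)-\mu_\theta(\phi))^2\bigr]\le \frac{4\rho_\theta}{N_\theta}=\frac{4}{\rho_\theta^{\delta}},
\]
which tends to zero as $\theta\to\infty$ because $\rho_\theta\to\infty$.

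For part (ii), I would invoke Proposition \ref{prop:necessary sample size} with a fixed $\alpha\in(0,1)$ to be determined. That result produces a test function $\phi$ for which
\[
\Prob\bigl(|\mu_\theta^{N_\theta}(\phi)-\mu_\theta(\phi)|=\Prob(\g_\theta(U_\theta)>\alpha\rho_\theta)\bigr)\ge 1-\frac{N_\theta}{\alpha\rho_\theta}=1-\frac{\rho_\theta^{-\delta}}{\alpha}\longrightarrow 1,
\]
for any fixed $\alpha>0$. It therefore suffices to bound $\Prob(\g_\theta(U_\theta)>\alpha\rho_\theta)$ below by a fixed $c\in(0,1)$ uniformly in $\theta$. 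Using $\Expect[\g_\theta(U_\theta)]=\mu_\theta(\g_\theta)=\pi_\theta(\g_\theta^2)=\rho_\theta$ together with $\V[\g_\theta(U_\theta)]\le \mathcal{V}\rho_\theta^2$, Chebyshev's inequality yields
\[
\Prob\bigl(\g_\theta(U_\theta)\le \alpha\rho_\theta\bigr)\le \frac{\mathcal{V}}{(1-\alpha)^2},
\]
and choosing $\alpha$ small enough that $(1-\alpha)^2>\mathcal{V}$---possible precisely because $\mathcal{V}<1$---produces $\Prob(\g_\theta(U_\theta)>\alpha\rho_\theta)\ge 1-\mathcal{V}/(1-\alpha)^2>0$. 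A slight shrinkage of this constant converts the resulting $|\cdot|\ge c$ into the strict $|\cdot|>c$ required by the statement.

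The only step that demands any care, and thus the main obstacle, is the coordination of the choice of $\alpha$: it must be strictly positive so that $N_\theta/(\alpha\rho_\theta)\to 0$, yet small enough relative to $\sqrt{\mathcal{V}}$ for Chebyshev to deliver a nontrivial lower bound on $\Prob(\g_\theta(U_\theta)>\alpha\rho_\theta)$. Recognizing that the assumption $\mathcal{V}<1$ is exactly what allows both requirements to be met by a single fixed $\alpha$ is the substantive insight; everything else reduces to plugging $N_\theta=\rho_\theta^{1\pm\delta}$ into the two propositions and taking limits.
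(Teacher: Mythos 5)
Your proof is correct and follows essentially the same route as the paper: part (i) is the direct application of Proposition \ref{prop:sufficient sample size}, and part (ii) combines the test function and bound from Proposition \ref{prop:necessary sample size} with a Chebyshev estimate on $\Prob(\g_\theta(U_\theta)>\alpha\rho_\theta)$ that exploits $\mathcal{V}<1$, exactly as in the paper. Your condition $(1-\alpha)^2>\mathcal{V}$ is in fact the precise requirement for the Chebyshev bound to be nontrivial (the paper's stated range $\alpha\in(0,1-\mathcal{V})$ is slightly too generous since $\sqrt{\mathcal{V}}\ge\mathcal{V}$ on $[0,1]$), so your coordination of $\alpha$ and $c$ is, if anything, a touch more careful than the published argument.
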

    \begin{proof}
   The proof of $(i)$ follows directly from Proposition \ref{prop:sufficient sample size}. For $(ii)$ we  fix $\alpha\in(0,1- \mathcal{V})$ and $c\in \Bigl(0,1-\frac{\mathcal{V}}{(1-\alpha)^2} \Bigr)$. Let $\phi_\theta( u):=\mathbbm{1}(\g_\theta( u)\leq\alpha\rho_\theta)$ as in the proof of Proposition \ref{prop:necessary sample size}. Then,
        \begin{align*}
        \Prob \bigl(\g_\theta(U_\theta)>\alpha\rho_\theta \bigr)
        \ge  1-\Prob \bigl(|\rho_\theta-\g_\theta(U_\theta)|
            \geq (1-\alpha)\rho_\theta \bigr)
        \geq 1-\frac{\V[\g_\theta(U_\theta) ]}{(1-\alpha)^2\rho_\theta^2}
        \geq 1-\frac{\mathcal{V}}{(1-\alpha)^2} > c.
        \end{align*}
    The bound in \eqref{eq:boundnec} implies that
    \[\Prob\Bigl( | \mu_\theta^{N_\theta}(\phi_\theta) - \mu_\theta(\phi_\theta) | > c \Bigr)
    \geq \Prob\Bigl(   | \muis^N_\theta(\phi_\theta) - \mu_\theta(\phi_\theta) | = \Prob( \g_\theta(U_\theta) > \alpha \rho_\theta \bigr) \Bigr)  
    \ge 1 - \frac{N_\theta} {\alpha\rho_\theta}.\]
    This completes the proof, since  if $N_\theta = \rho_\theta^{1-\delta}$ the right-hand side goes to $1$ as $\theta\to\infty$.  
    \end{proof} 

The assumption that $\mathcal{V}<1$ can be verified for some singular limits of interest, in particular for small noise and large prior covariance limits studied in Sections \ref{sec:3} and \ref{sec:4}; details will be given in Example \ref{example: 1-d gaussian, verify assumption of sup var/rho2}. While the assumption $\mathcal{V} <1$ may fail to hold in high dimensional singular limit regimes, the works \cite{BBL08,BLB08} and our numerical example in Subsection \ref{ssec:dalargedimension} provide compelling evidence of the need to suitably scale $N$ with $\rho$ along those singular limits in order to avoid a weigh-collapse phenomenon. Further theoretical evidence was given for unnormalized importance sampling in \cite{sanz2018importance}.

\subsection{ \texorpdfstring{$\chi^2$}{chi2}-divergence and Effective Sample Size}\label{ssec:ESS}
The previous subsection provides theoretical non-asymptotic and asymptotic evidence that a sample size larger than $\rho$ is necessary and sufficient for accurate importance sampling. Here we recall a well known connection between the $\chi^2$-divergence and the effective sample size
\begin{equation}
\text{ESS} := \frac{1}{\sum_{n=1}^N (w^{(n)} )^2},
\end{equation}
widely used by practitioners to monitor the performance of importance sampling. Note that always  $1 \le \text{ESS} \le N$; it is intuitive that $\text{ESS} = 1$ if the maximum weight is one and $\text{ESS} = N$ if the maximum weight is $1/N.$ To see the connection between $\text{ESS}$ and $\rho$, note that 
\begin{align*}
\frac{\text{ESS}}{N} 
&= \frac{1}{N \sum_{n=1}^N (w^{(n)} )^2}  
= \frac{ \Bigl( \sum_{n=1}^N g(u^{(n)})  \Bigr)^2}{N \sum_{n=1}^N g(u^{(n)})^2} 
= \frac{ \biggl(  \frac{1}{N} \sum_{n=1}^N g(u^{(n)}) \biggr)^2}{\frac{1}{N} \sum_{n=1}^N g(u^{(n)})^2}
 \approx \frac{\pi(g)^2}{\pi(g^2)}.
\end{align*}
Therefore, $\text{ESS} \approx N / \rho:$  if the sample-based estimate of $\rho$ is significantly larger than $N$, $\text{ESS}$ will be small which gives a warning sign that a larger sample size $N$ may be needed.

\subsection{\texorpdfstring{$\chi^2$}{chi2}-divergence Between Gaussians}\label{ssec:chiGaussians}
We conclude this section by recalling an analytical expression for the 
$\chi^2$-divergence between Gaussians. In order to make our presentation self-contained, we include a proof in Appendix \ref{appendixA}.
	\begin{proposition}\label{prop:gaussian_chi2}
		Let $\mu=\mathcal{N}(m,C)$ and $\pi=\mathcal{N}(0,\Sigma)$. If $2\Sigma\succ C$, then
		\[ \rho =\frac{|\Sigma|}{\sqrt{|2\Sigma-C||C|}} \exp \Bigl(m'(2\Sigma-C)^{-1}m \Bigr). \]
        Otherwise, $\rho= \infty$.
	\end{proposition}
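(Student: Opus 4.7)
The plan is to compute $\rho$ directly from the identity
$$\rho = \int \frac{d\mu}{d\pi}(u)\,\mu(du),$$
which reduces everything to a single Gaussian integral. First, I would write out the Radon-Nikodym derivative
$$\frac{d\mu}{d\pi}(u) = \sqrt{\frac{|\Sigma|}{|C|}}\,\exp\Bigl(-\tfrac12(u-m)'C^{-1}(u-m) + \tfrac12 u'\Sigma^{-1}u\Bigr),$$
substitute into $\rho = \int (d\mu/d\pi)\, d\mu$, and collect the two Gaussian exponentials. The integrand then has an exponent of the form
$$-\tfrac12 u'Au + 2u'C^{-1}m - m'C^{-1}m, \qquad A := 2C^{-1} - \Sigma^{-1}.$$

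The key algebraic identity I would use throughout is
$$A = C^{-1}(2\Sigma - C)\Sigma^{-1},$$
which does double duty: it makes manifest that $A \succ 0$ is equivalent to $2\Sigma \succ C$ (so the Gaussian integral converges precisely under the stated hypothesis), and it yields the determinant relation $|A| = |2\Sigma - C|/(|C||\Sigma|)$ that produces the $\sqrt{|2\Sigma - C||C|}$ in the denominator of the target formula. For the ``otherwise'' case, if $2\Sigma - C$ fails to be positive definite, then $A$ has a nonpositive eigenvalue, the Gaussian integrand is not integrable, and $\rho = \infty$.

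Next, I would complete the square in $u$, writing the exponent as $-\tfrac12(u-b)'A(u-b) + \text{const}$ with $b = 2A^{-1}C^{-1}m$, so that the $u$-integral evaluates to $(2\pi)^{d/2}/\sqrt{|A|}$ and the leftover constant is $m'(2C^{-1}A^{-1}C^{-1} - C^{-1})m$. Here the same factorization of $A$ simplifies matters cleanly: using $A^{-1} = \Sigma(2\Sigma - C)^{-1}C$,
$$2C^{-1}A^{-1}C^{-1} - C^{-1} = C^{-1}\bigl[2\Sigma(2\Sigma - C)^{-1} - I\bigr] = C^{-1}\cdot C\cdot (2\Sigma - C)^{-1} = (2\Sigma - C)^{-1}.$$
Assembling the prefactor $\sqrt{|\Sigma|}/|C|$, the Gaussian normalization $1/\sqrt{(2\pi)^d}$ from $\mu$, the $(2\pi)^{d/2}/\sqrt{|A|}$ from the integral, and the residual exponential recovers the stated formula.

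The main obstacle is purely bookkeeping: the proof hinges on choosing the right factorization of $A$ so that both the determinant and the residual quadratic form simplify without passing through expressions that are only symmetric up to transposition of noncommuting factors (since in general $C$ and $\Sigma$ need not commute). Once the identity $A = C^{-1}(2\Sigma - C)\Sigma^{-1}$ is in hand, the rest is routine Gaussian integration.
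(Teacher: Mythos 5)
Your proof is correct, but it takes a different route from the paper's. The paper derives Proposition \ref{prop:gaussian_chi2} as a corollary of a general exponential-family identity (Lemma \ref{lemma:chi2_expfamily} in Appendix \ref{appendixA}): writing both Gaussians with natural parameters $\theta_\mu,\theta_\pi$ and log-partition function $F$, it obtains $\rho = \exp\bigl(F(2\theta_\mu-\theta_\pi)-2F(\theta_\mu)+F(\theta_\pi)\bigr)$, with finiteness governed by whether $2\theta_\mu-\theta_\pi$ lies in the natural parameter space, i.e.\ whether $(2C^{-1}-\Sigma^{-1})^{-1}$ is positive definite --- exactly your matrix $A$. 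You instead compute $\rho=\int(d\mu/d\pi)\,d\mu$ by bare-hands completion of the square; the two computations are the same Gaussian integral in different clothing, and your algebra checks out: the factorization $A=C^{-1}(2\Sigma-C)\Sigma^{-1}$ correctly yields $|A|=|2\Sigma-C|/(|C||\Sigma|)$ and $2C^{-1}A^{-1}C^{-1}-C^{-1}=(2\Sigma-C)^{-1}$, and assembling the constants gives the stated formula. Your approach is more elementary and self-contained (no appeal to the natural-parameter-space characterization), while the paper's buys reusability for other exponential families at the cost of the appendix machinery. One small point worth tightening: the equivalence $A\succ 0\iff 2\Sigma\succ C$ is not literally "manifest" from $A=C^{-1}(2\Sigma-C)\Sigma^{-1}$, since that product is not a congruence; the cleanest justification is operator anti-monotonicity of the inverse ($2C^{-1}\succ\Sigma^{-1}\iff\Sigma\succ C/2$), or conjugating to $\Sigma^{1/2}A\,\Sigma^{1/2}$ and noting it is similar to the congruence $C^{-1/2}(2\Sigma-C)C^{-1/2}$. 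Likewise, in the degenerate case where $A$ has a zero eigenvalue you should note explicitly that the exponent restricted to that eigendirection is affine, so the integral still diverges; with those two clarifications the argument is complete.
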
	
It is important to note that non-degenerate Gaussians $\mu=\mathcal{N}(m,C)$ and $\pi=\mathcal{N}(0,\Sigma)$ in $\R^d$ are always equivalent. However,  $\rho =\infty $ unless $2\Sigma \succ C.$ In Sections \ref{sec:3} and \ref{sec:4} we will interpret $\mu$ as a posterior and $\pi$ as a prior, in which case automatically $\Sigma \succ C$ and $\rho <\infty.$

\section{Importance Sampling for Inverse Problems}\label{sec:3}
In this section we study the use of importance sampling in a linear Bayesian inverse problem setting where the target and the proposal represent, respectively, the posterior and the prior distribution.  In Subsection \ref{ssec:ipsetting} we describe our setting and we also derive an explicit formula for the $\chi^2$-divergence between the posterior and the prior. This explicit formula allows us to determine the scaling of the $\chi^2$-divergence in small noise regimes (Subsection \ref{ssec:ipnoise}), in the limit of large prior covariance (Subsection \ref{ssec:ippriorscaling}), and in a high dimensional limit  (Subsection \ref{ssec:ipdimension}). Our overarching goal is to show how the sample size for importance sampling needs to grow along these limiting regimes in order to maintain the same level of accuracy.  

\subsection{Inverse Problem Setting and \texorpdfstring{$\chi^2$}{chi2}-divergence Between Posterior and Prior}\label{ssec:ipsetting}
Let $A\in \R^{k \times d}$ be a given \emph{design} matrix and consider the linear inverse problem of recovering $u \in \R^d$ from data $y \in \R^k$ related by 
\begin{equation}\label{eq:inverseproblem}
y=\A u+\eta, \quad \quad \eta \sim \Nc(0,\Gamma),
\end{equation}
where $\eta$ represents measurement noise. We assume henceforth that we are in the underdetermined case $k \le d$, and that $A$ is full rank. 
We follow a Bayesian perspective and set a Gaussian prior on $u$, $u \sim \pi = \mathcal{N}(0,\Sigma).$ We assume throughout that $\Sigma$ and $\Gamma$ are given symmetric positive definite matrices. The solution to the Bayesian formulation of the inverse problem is the posterior distribution $\mu$ of $u$ given $y.$ 
We are interested in studying the performance of importance sampling with proposal $\pi$ (the prior) and target $\mu$ (the posterior). We recall  that under this linear-Gaussian model the posterior distribution is Gaussian \cite{sanzstuarttaeb}, and we denote it by $\mu = \Nc(m,C)$. In order to characterize the posterior mean $m$ and covariance $C$, we introduce standard data assimilation notation	
	\begin{align*}
S & := A \Sigma A' + \Gamma, \\
K & := \Sigma A' S^{-1},
\end{align*}
where $K$ is the Kalman gain. Then we have
\begin{align}\label{eq:mandC}
\begin{split}
m &= K y, \\
C &= (I-KA) \Sigma.
\end{split}
\end{align}
Proposition \ref{prop:gaussian_chi2} allows us to obtain a closed formula for the quantity $\rho = \dchi(\mu\| \pi) + 1$, noting that \eqref{eq:mandC} implies that
	\begin{align*}
	2 \Sigma - C &= (I + KA) \Sigma \\
	  & = \Sigma + \Sigma A'S^{-1} A \Sigma \succ 0.
	\end{align*}
The proof of the following result is then immediate and therefore omitted. 
		\begin{proposition}\label{prop:inverse_problem_chi2}
		Consider the inverse problem \eqref{eq:inverseproblem} with prior $u\sim \pi = \mathcal{N}(0,\Sigma)$ and posterior $\mu = \Nc(m,C)$ with $m$ and $C$ defined in \eqref{eq:mandC}. Then  $\rho = \dchi( \mu \| \pi) + 1$ admits the explicit characterization
		\[\rho = (|I+K A||I-K A|)^{-\frac{1}{2}} \exp \Bigl(y' K' [ (I + KA) \Sigma]^{-1} Ky \Bigr).\] 
	\end{proposition}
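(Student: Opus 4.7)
The plan is to apply Proposition \ref{prop:gaussian_chi2} directly, using the identities for the posterior mean $m = Ky$ and covariance $C = (I - KA)\Sigma$ recorded in \eqref{eq:mandC}, together with the factorization $2\Sigma - C = (I + KA)\Sigma$ that was already derived in the excerpt just before the statement. Since the positive definiteness condition $2\Sigma \succ C$ has also been verified there, the hypothesis of Proposition \ref{prop:gaussian_chi2} holds and the formula
\[
\rho = \frac{|\Sigma|}{\sqrt{|2\Sigma - C| \, |C|}} \exp\Bigl(m'(2\Sigma - C)^{-1} m\Bigr)
\]
is available for substitution.

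First I would handle the determinantal prefactor. Using multiplicativity of the determinant, $|2\Sigma - C| = |(I + KA)\Sigma| = |I + KA| \, |\Sigma|$ and $|C| = |(I - KA)\Sigma| = |I - KA| \, |\Sigma|$. Therefore
\[
\sqrt{|2\Sigma - C| \, |C|} = |\Sigma| \sqrt{|I + KA| \, |I - KA|},
\]
so the prefactor simplifies to $(|I + KA| \, |I - KA|)^{-1/2}$, matching the claim.

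Next I would handle the exponent. Substituting $m = Ky$ and $2\Sigma - C = (I + KA)\Sigma$ gives
\[
m'(2\Sigma - C)^{-1} m = y' K' \bigl[(I + KA)\Sigma\bigr]^{-1} K y,
\]
which is exactly the exponent appearing in the statement. Combining the two simplifications yields the claimed explicit formula for $\rho$.

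There is no real obstacle here; the entire argument is a direct substitution into Proposition \ref{prop:gaussian_chi2} once the factorizations of $2\Sigma - C$ and $C$ through $\Sigma$ are recognized. The only minor point worth noting is that $I \pm KA$ need not be symmetric, but the determinants $|I \pm KA|$ are nevertheless well defined and positive (positivity of $|I - KA|$ follows from $C = (I - KA)\Sigma \succ 0$ and $|\Sigma| > 0$, and similarly for $|I + KA|$ from $2\Sigma - C \succ 0$), so the square root in the prefactor is unambiguous.
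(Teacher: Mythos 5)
Your proof is correct and coincides with the paper's intended argument: the paper explicitly notes the factorization $2\Sigma - C = (I+KA)\Sigma \succ 0$ just before the statement and then declares the proof ``immediate'' from Proposition \ref{prop:gaussian_chi2}, which is precisely the substitution you carry out. Your added remark on the positivity of the determinants $|I\pm KA|$ is a sensible (if minor) point of care that the paper leaves implicit.
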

	In the following two subsections we employ this result to derive by direct calculation the rate at which the posterior and prior become further apart ---in $\chi^2$-divergence--- in small noise and large prior regimes. To carry out the analysis we use parameters $\gamma^2, \sigma^2>0$ to scale the noise covariance, $\gamma^2 \Gamma,$ and the prior covariance, $\sigma^2 \Sigma.$

%

\subsection{Importance Sampling in Small Noise Regime}\label{ssec:ipnoise}
To illustrate the behavior of importance sampling in small noise regimes, we first introduce a motivating numerical study.  A similar numerical setup was used in \cite{BBL08} to demonstrate the curse of dimension of importance sampling. We consider the inverse problem setting in Equation \eqref{eq:inverseproblem} with $d = k = 5$ and noise covariance $\gamma^2 \Gamma.$ We conduct $18$ numerical experiments with a fixed data $y$. For each experiment, we perform importance sampling $400$ times, and report in Figure \ref{fig:Noise Scaling d=5} a histogram with the largest autonormalized weight in each of the $400$ realizations. The $18$ experiments differ in the sample size $N$ and the size of the observation noise $\gamma^2.$ In both Figures \ref{fig:Noise Scaling d=5}.a and \ref{fig:Noise Scaling d=5}.b we consider three choices of $N$ (rows) and three choices of $\gamma^2$ (columns). These choices are made so that in Figure \ref{fig:Noise Scaling d=5}.a it holds that $N = \gamma^{-4}$ along the bottom-left to top-right diagonal, while in Figure \ref{fig:Noise Scaling d=5}.b $N = \gamma^{-6}$ along the same diagonal.

\begin{figure}[!htb] 
    \begin{minipage}{.5\textwidth}
       
        \includegraphics[width=1\linewidth, height=0.265\textheight]{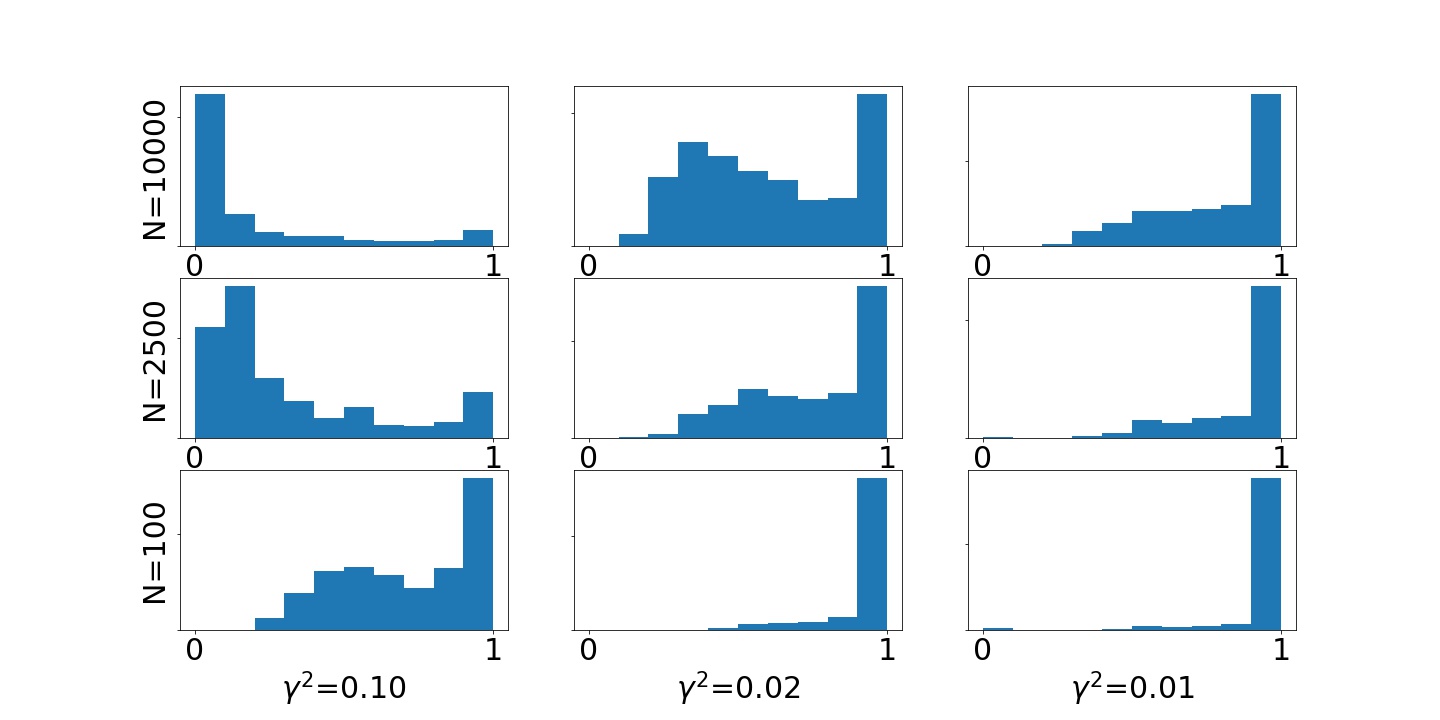}
        \subcaption{$N=\gamma^{-4}$.}
        \label{fig:prob1_6_2}
    \end{minipage}%
    \begin{minipage}{0.5\textwidth}
       
        \includegraphics[width=1\linewidth, height=0.265\textheight]{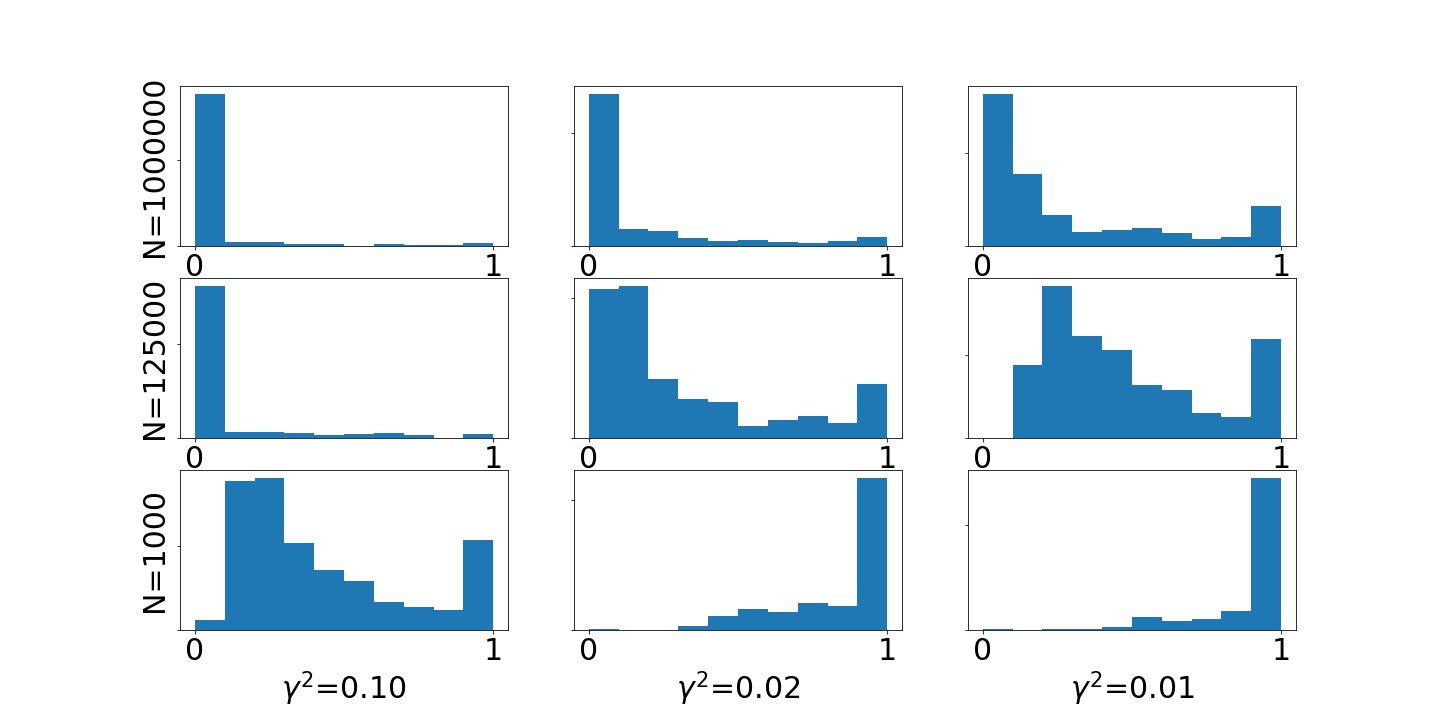}
        \subcaption{$N = \gamma^{-6}$.}
        \label{fig:prob1_6_1}
    \end{minipage}
    \caption{Noise scaling with $d = k =  5.$ \label{fig:Noise Scaling d=5}}
\end{figure}

We can see from Figure \ref{fig:Noise Scaling d=5}.a that $N=\gamma^{-4}$ is not a fast enough growth of $N$ to avoid weight collapse: the histograms skew to the right along the bottom-left to top-right diagonal, suggesting that weight collapse (i.e. one weight dominating the rest, and therefore the variance of the weights being large) is bound to occur in the joint limit $N\to \infty,$ $\gamma\to 0$ with $N=\gamma^{-4}$. In contrast, the histograms in Figure \ref{fig:Noise Scaling d=5}.b skew to the left along the same diagonal, suggesting that  the probability of weight collapse is significantly reduced if $N =\gamma^{-6}$. We observe a similar behavior with other choices of dimension $d$ by conducting experiments with sample sizes $N=\gamma^{-d+1}$ and $N=\gamma^{-d-1}$, and we include the histograms with $d=k=4$ in Appendix \ref{AppendixC}. Our next result shows that these empirical findings are in agreement with the scaling of the $\chi^2$-divergence between target and proposal in the small noise limit. 
\begin{proposition}\label{prop:noise_scaling}
    Consider the inverse problem setting
    \begin{equation*}
    y=\A u+\eta, \quad \quad \eta  =\Nc(0,\gamma^2\Gamma)
    , \quad \quad u\sim  \pi = \Nc(0,\Sigma).
    \end{equation*}
    Let $\mu_\gamma$ denote the posterior and let $\rho_\gamma = \dchi( \mu_\gamma\| \pi) +1.$  Then, for almost every $y,$  
	\[\rho_\gamma\sim\bigO(\gamma^{-k})\]
	 in the small noise limit $\gamma\to 0.$
\end{proposition}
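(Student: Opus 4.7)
The plan is to start from the explicit formula in Proposition \ref{prop:inverse_problem_chi2} and track the $\gamma$-dependence of each factor separately. Writing $S_\gamma = A\Sigma A' + \gamma^2\Gamma$ and $K_\gamma = \Sigma A' S_\gamma^{-1}$, we have
\[\rho_\gamma = \bigl(|I + K_\gamma A|\,|I - K_\gamma A|\bigr)^{-1/2}\exp\bigl(y'K_\gamma'[(I + K_\gamma A)\Sigma]^{-1} K_\gamma y\bigr).\]
Since $A$ is full row-rank, $S_\gamma$ remains invertible in the limit and $K_\gamma \to K_0 := \Sigma A'(A\Sigma A')^{-1}$ as $\gamma \to 0$, making $K_0 A$ an oblique projection of rank $k$. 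Thus the exponential factor converges to a finite positive limit, while $|I + K_\gamma A| \to |I + K_0 A| = 2^k$ (the eigenvalues of $I + K_0 A$ are $2$ with multiplicity $k$ and $1$ with multiplicity $d - k$). Both of these are $\Theta(1)$ in $\gamma$.

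The heart of the argument is to show $|I - K_\gamma A| = \Theta(\gamma^{2k})$. I would first use the identity $|I - K_\gamma A| = |C_\gamma|/|\Sigma|$ coming from $C_\gamma = (I - K_\gamma A)\Sigma$, and then apply the Woodbury identity to get $C_\gamma^{-1} = \Sigma^{-1} + \gamma^{-2}A'\Gamma^{-1}A$. Setting $\tilde{M} := \Sigma^{1/2}A'\Gamma^{-1}A\Sigma^{1/2}$, this yields
\[|C_\gamma^{-1}| = |\Sigma|^{-1}\,|I + \gamma^{-2}\tilde{M}|.\]
Since $A$ has rank $k$, so does $\tilde{M}$; its non-zero eigenvalues $\tilde\lambda_1, \ldots, \tilde\lambda_k$ coincide with those of the $k \times k$ positive-definite matrix $\Gamma^{-1/2}A\Sigma A'\Gamma^{-1/2}$. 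A direct diagonalization then gives
\[|I + \gamma^{-2}\tilde{M}| = \gamma^{-2k}\prod_{i=1}^k(\gamma^2 + \tilde\lambda_i) \sim \gamma^{-2k}\prod_{i=1}^k \tilde\lambda_i,\]
so that $|C_\gamma| \sim \gamma^{2k}\,|\Sigma|\,(\prod_i \tilde\lambda_i)^{-1}$ and $|I - K_\gamma A| \sim \gamma^{2k}(\prod_i \tilde\lambda_i)^{-1}$. Combining the three pieces yields $\rho_\gamma \sim c\,\gamma^{-k}$ with $c > 0$ depending on $A,\Sigma,\Gamma$ and $y$, which in particular gives the claimed $\rho_\gamma = \Theta(\gamma^{-k})$ behavior.

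The main obstacle is the rank-deficiency bookkeeping: because $A'\Gamma^{-1}A$ has rank $k < d$ in general, the determinant $|C_\gamma^{-1}|$ blends $\gamma^{-2}$ contributions along the range of $A'$ with $\Sigma^{-1}$ contributions along its null space, and one must extract the right power of $\gamma$ cleanly. Once the non-zero eigenvalues of $\tilde{M}$ are identified as the source of the $\gamma^{2k}$ scaling of $|I - K_\gamma A|$, the rest is routine linear algebra. The \emph{almost every} $y$ caveat is not really engaged by the argument; $K_0$ and $(I + K_0 A)\Sigma$ are well defined and invertible for every $y$, so the exponential factor has a finite positive limit unconditionally and the conclusion in fact holds for all $y\in \R^k$.
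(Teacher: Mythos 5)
Your proof is correct and follows essentially the same route as the paper: both start from the closed formula of Proposition \ref{prop:inverse_problem_chi2}, show that the exponential factor converges because $K_\gamma \to \Sigma A'(A\Sigma A')^{-1}$, and extract the $\gamma^{-k}$ rate from $|I-K_\gamma A| = \prod_{i=1}^k \gamma^2/(\xi_i^2+\gamma^2)$ --- you reach these eigenvalues via the precision identity $C_\gamma^{-1}=\Sigma^{-1}+\gamma^{-2}A'\Gamma^{-1}A$, whereas the paper uses a similarity transform based on the SVD of $\Gamma^{-1/2}A\Sigma^{1/2}$ (your $\tilde\lambda_i$ are exactly its squared singular values $\xi_i^2$). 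Your closing observation is also right: the limiting exponential factor is finite and positive for every $y$, so the ``almost every $y$'' qualifier is not actually needed for this particular statement.
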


\begin{proof}
	Let $K_\gamma= \Sigma \A'(\A\Sigma \A'+\gamma^2\Gamma)^{-1}$ denote the Kalman gain. We observe that $K_\gamma\to \Sigma A' (A \Sigma A')^{-1}$ under our standing assumption that $A$ is full rank. 
    Let $U'\Xi V$ be the singular value decompostion of $\Gamma^{-\frac{1}{2}}A\Sigma^{\frac{1}{2}}$  and $\{\xi_i\}_{i=1}^k$ be the singular values.
    Then we have
        \begin{align*}
      K_\gamma \A
        &\sim \Sigma^{\frac{1}{2}}A'\Gamma^{-\frac{1}{2}}(\Gamma^{-\frac{1}{2}}A\Sigma A'\Gamma^{-\frac{1}{2}}+\gamma^2I)^{-1}\Gamma^{-\frac{1}{2}}A\Sigma^{\frac{1}{2}}\\
        &= V'\Xi' U(U'\Xi VV'\Xi' U+\gamma^2 I)^{-1}U'\Xi V\\
        &\sim \Xi'(\Xi\Xi'+\gamma^2I)^{-1}\Xi,
        \end{align*}
     where here ``$\sim$'' denotes matrix similarity.
        It follows that $I + K_\gamma A$ converges to a finite limit, and so does the exponent $y' K_\gamma'\Sigma^{-1}(I+ K_\gamma A)^{-1}K_\gamma y$ in Proposition \ref{prop:inverse_problem_chi2}.
On the other hand, 
        \begin{align*}
        (|I + K_\gamma \A||I- K_\gamma \A|)^{-\frac{1}{2}} 
        = \Bigl(\prod_{i=1}^{k}\frac{\gamma^2}{\xi_i^2+\gamma^2}\Bigr)^{-\frac{1}{2}}
        \sim\bigO(\gamma^{-k})
        \end{align*}
    as $\gamma\to 0$. The conclusion follows.
\end{proof}

%

\subsection{Importance Sampling and Prior Scaling}\label{ssec:ippriorscaling}
Here we illustrate the behavior of importance sampling in the limit of large prior covariance. 
We start again with a motivating numerical example, similar to the one reported in  Figure \ref{fig:Noise Scaling d=5}. The behavior is analogous to the small noise regime, which is expected since the \emph{ratio} of prior and noise covariances determines the closeness between target and proposal. Figure \ref{fig:Prior Scaling d=5}  shows that when $d=k=5$ weight collapse is observed frequently when the sample size $N$ grows as $\sigma^{4},$ but not so often with sample size $N=\sigma^{6}$. Similar histograms with $d=k=4$ are included in Appendix \ref{AppendixC}. These empirical results are in agreement with the theoretical growth rate of the $\chi^2$-divergence between target and proposal in the limit of large prior covariance, as we prove next.
\FloatBarrier
	
\begin{figure}[!htb] 
    \begin{minipage}{.5\textwidth}
       
        \includegraphics[width=1\linewidth, height=0.255\textheight]{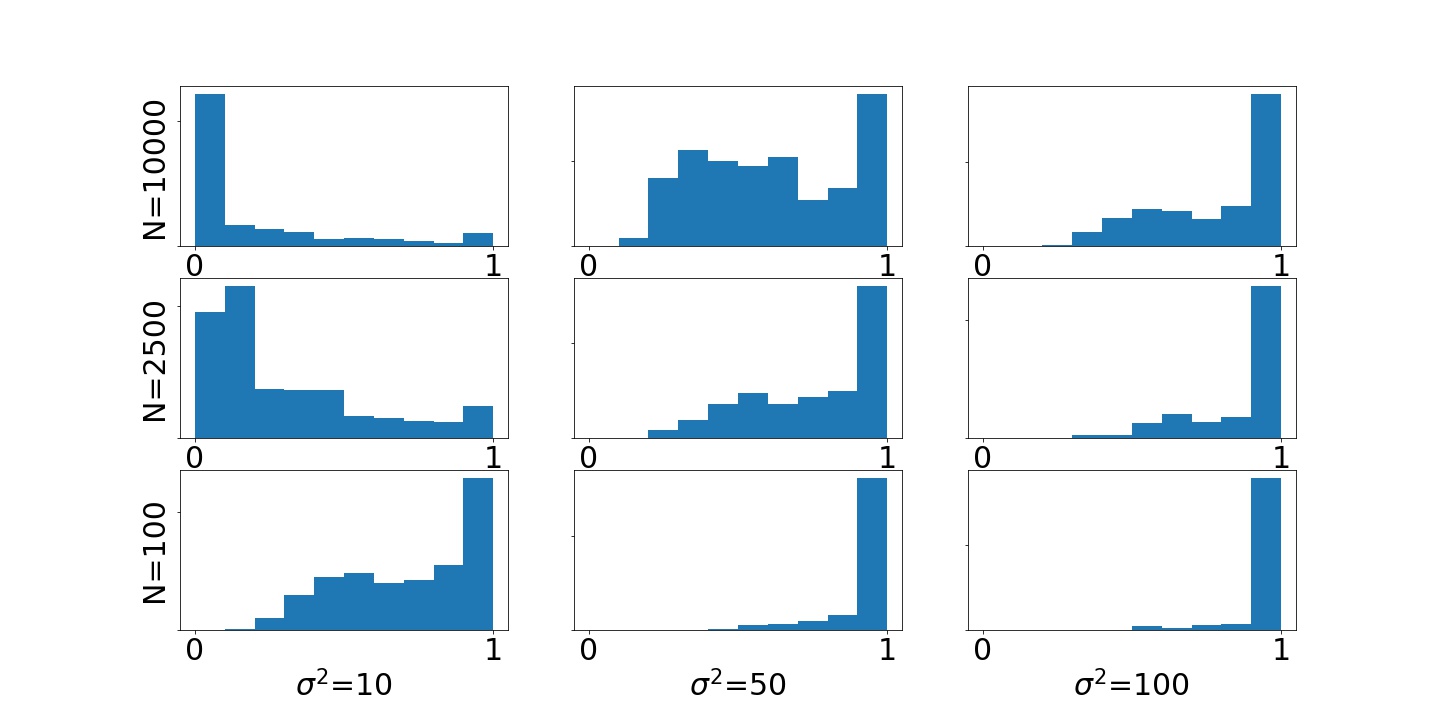}
        \subcaption{$N=\sigma^{4}$.}
        \label{fig:prob1_6_2}
    \end{minipage}%
    \begin{minipage}{0.5\textwidth}
       
        \includegraphics[width=1\linewidth, height=0.255\textheight]{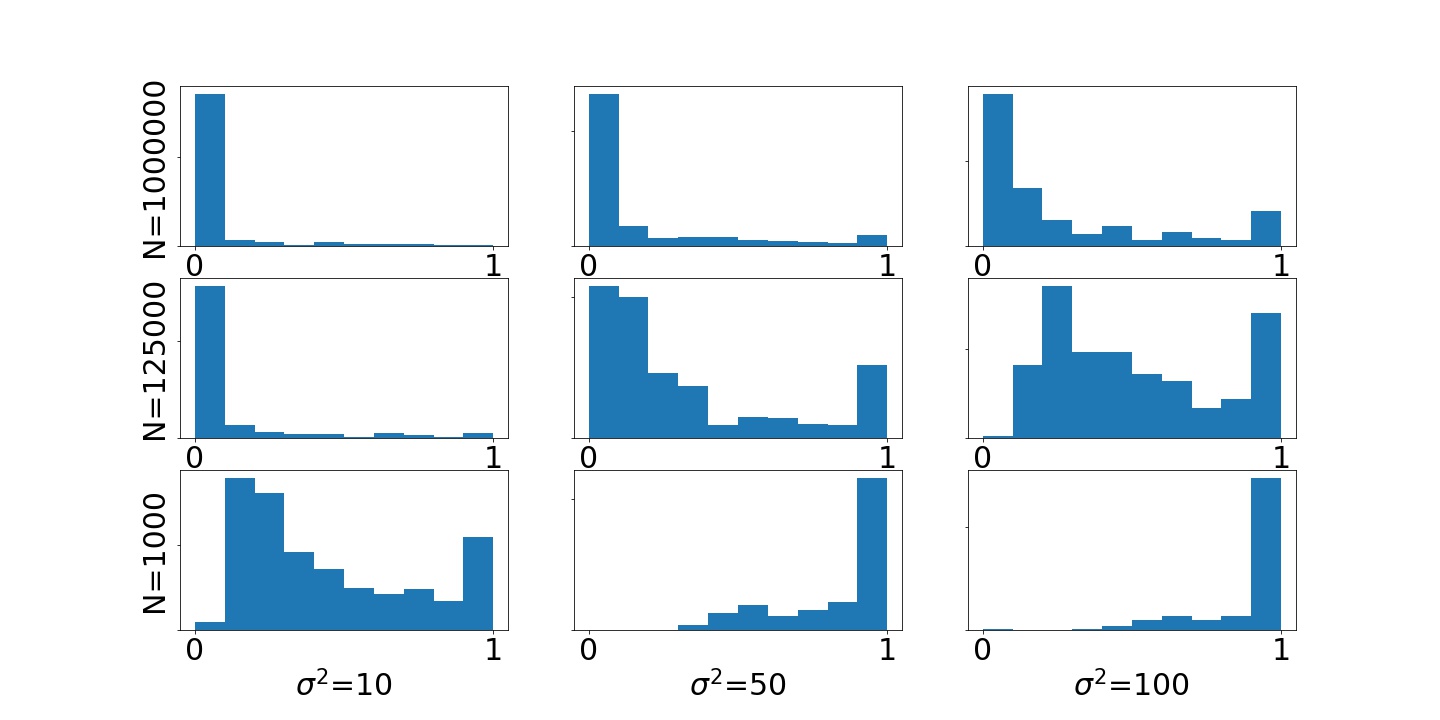}
        \subcaption{$N = \sigma^{6}$.}
        \label{fig:prob1_6_1}
    \end{minipage}
    \caption{Prior scaling $d = k =  5.$ \label{fig:Prior Scaling d=5}}
\end{figure}

\FloatBarrier

\begin{proposition}\label{prop:prior_scaling}
    Consider the inverse problem setting
    \begin{equation*}
    y=\A u+\eta, \quad \quad \eta \sim \Nc(0,\Gamma)
    , \quad \quad u\sim \pi_\sigma = \Nc(0,\sigma^2\Sigma).
    \end{equation*}
    Let $\mu_\sigma$ denote the posterior and $\rho_\sigma = \dchi(\mu_\sigma \| \pi_\sigma) + 1.$ Then, for almost every $y,$
	\[  \rho_\sigma\sim\bigO(\sigma^d)\]
	in the large prior limit $\sigma \to \infty.$
\end{proposition}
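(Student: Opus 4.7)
My plan is to mirror the strategy of Proposition \ref{prop:noise_scaling}, exploiting the closed-form expression in Proposition \ref{prop:inverse_problem_chi2} with $\Sigma$ replaced by $\sigma^2\Sigma$. After this replacement the Kalman gain becomes $K_\sigma = \sigma^2\Sigma A'(\sigma^2 A\Sigma A' + \Gamma)^{-1}$; under the standing full-rank assumption on $A$ it converges as $\sigma\to\infty$ to the finite matrix $\Sigma A'(A\Sigma A')^{-1}$, and therefore $I + K_\sigma A$ also possesses a finite limit.

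The first step is to dispose of the exponential factor $\exp\bigl(y' K_\sigma'[(I+K_\sigma A)\sigma^2\Sigma]^{-1}K_\sigma y\bigr)$. The rescaling of the prior produces an explicit $\sigma^{-2}$ prefactor inside the quadratic form, and every remaining matrix involved converges to a finite limit. Hence the exponent vanishes for every $y$ and the exponential contributes a factor $1 + o(1)$. This step is actually easier than its small-noise analogue, where the corresponding exponent converges to a nontrivial $y$-dependent limit.

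The second step is to evaluate the determinant factor $(|I+K_\sigma A||I-K_\sigma A|)^{-1/2}$ by invoking the SVD $U'\Xi V$ of $\Gamma^{-1/2}A\Sigma^{1/2}$ with singular values $\xi_1,\ldots,\xi_k$, exactly as in the proof of Proposition \ref{prop:noise_scaling}. The same similarity manipulations show that $K_\sigma A$ is similar to $\sigma^2\Xi'(\sigma^2\Xi\Xi' + I_k)^{-1}\Xi$, whose nonzero eigenvalues are $\sigma^2\xi_i^2/(\sigma^2\xi_i^2+1)$ for $i=1,\ldots,k$, supplemented by a $(d-k)$-dimensional kernel coming from the unobserved directions of $\R^d$. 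Substituting into the formula, $|I + K_\sigma A|$ converges to a positive constant while $|I - K_\sigma A| = \prod_{i=1}^{k}(\sigma^2\xi_i^2+1)^{-1}$, and the product yields the claimed rate of growth.

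The main bookkeeping obstacle is precisely the rank-$k$ structure of $K_\sigma A$ inside a $d$-dimensional ambient space: one must carefully account for the $(d-k)$ unit eigenvalues of $I\pm K_\sigma A$ contributed by the kernel of $A$ so as not to confuse what scales with $\sigma$ and what does not. This is already handled transparently by the similarity reduction borrowed from the small-noise proof, so no essentially new technique is required beyond that developed for Proposition \ref{prop:noise_scaling}.
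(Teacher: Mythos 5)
Your argument is correct and follows essentially the same route as the paper: the paper simply observes that $K_\sigma=K_{\gamma=1/\sigma}$ and recycles the three conclusions of Proposition~\ref{prop:noise_scaling} (finite invertible limit of $I+K_\sigma A$, determinant asymptotics via the SVD of $\Gamma^{-1/2}A\Sigma^{1/2}$), then kills the exponent through the explicit $\sigma^{-2}$ prefactor in $\Sigma_\sigma^{-1}$, exactly as you do; re-deriving the similarity reduction rather than citing it is the only difference. One small caveat: your determinant computation (like the paper's own) delivers $(|I+K_\sigma A||I-K_\sigma A|)^{-1/2}\sim\bigO(\sigma^{k})$, which matches the stated rate $\bigO(\sigma^{d})$ only in the square case $k=d$, so the phrase ``yields the claimed rate'' should really read ``yields the rate $\sigma^{k}$'' under the standing assumption $k\le d$.
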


\begin{proof}
Let $\Sigma_\sigma=\sigma^2\Sigma$, let $K_\sigma=\Sigma_\sigma \A'(\A \Sigma_\sigma  \A'+\Gamma)^{-1}$ be the Kalman gain. Observing that $K_\sigma=K_{\gamma=\frac{1}{\sigma}}$, we apply Proposition \ref{prop:noise_scaling} and deduce that when $\sigma\to\infty$: 
    \begin{enumerate}
        \item $K_\sigma\to \Sigma \A'(\A\Sigma \A'+\gamma^2\Gamma)^{-1}$;
        \item $I+K_\sigma A$ has a well-defined and invertible limit;
        \item $|I-K_\sigma A|^{-\frac{1}{2}}\sim\bigO(\sigma^k)$.
    \end{enumerate}
On the other hand, we notice that the quadratic term
    \[K_\sigma'\Sigma_\sigma^{-1}(I+K_\sigma A)^{-1}K_\sigma
    =\sigma^{-2}K_\sigma'\Sigma(I+K_\sigma A)^{-1}K_\sigma\]
    vanishes in limit.  The conclusion follows by Proposition \ref{prop:inverse_problem_chi2}.
\end{proof}

\subsection{Importance Sampling in High Dimension}\label{ssec:ipdimension}
In this subsection we study importance sampling in high dimensional limits. To that end, we let
  $\{a_i\}_{i =1}^\infty,$ $\{\gamma_i^2\}_{i=1}^\infty$ and  $\{\sigma_i^2\}_{i=1}^\infty$ be infinite sequences and we define, for any $d\ge 1,$
	\begin{align*}
	A_{1:d} &:= \text{diag} \Bigl\{ a_1, \ldots, a_d\Bigr\} \in \R^{d\times d}, \\
	\Gamma_{1:d} &:= \text{diag} \Bigl\{ \gamma_1^2, \ldots, \gamma_d^2\Bigr\} \in \R^{d\times d}, \\
	\Sigma_{1:d} &:= \text{diag} \Bigl\{ \sigma_1^2, \ldots, \sigma_d^2\Bigr\}  \in \R^{d\times d}.
	\end{align*}
	We then consider the inverse problem of reconstructing $u \in \R^d$ from data $y \in \R^d$ under the setting
	\begin{equation}\label{eq:ddimip}
	y= A_{1:d} u + \eta, \quad \quad \eta \sim \Nc(0,\Gamma_{1:d}), \quad \quad  u \sim \pi_{1:d} = \Nc(0,\Sigma_{1:d}).
	\end{equation}
 We denote the corresponding posterior distribution by $\mu_{1:d},$ which  is Gaussian with a diagonal covariance. Given observation $y$, we may find the posterior distribution $\mu_i$ of $u_i$ by solving the one dimensional linear-Gaussian inverse problem
 \begin{equation}\label{eq:1dipsetting}
 y_i = a_i u_i+ \eta_i, \quad \quad \eta_i \sim \Nc(0,\gamma_i^2), \quad \quad 1\le i \le d,
 \end{equation}
with prior $\pi_i= \Nc(0, \sigma_i^2).$ In this way we have defined, for each $d \in \N \cup \{\infty\},$ an inverse problem with prior and posterior
\begin{equation}\label{eq:product structure}
\pi_{1:d} = \prod_{i=1}^d \pi_i, \quad \quad  \mu_{1:d} = \prod_{i=1}^d \mu_i.
\end{equation}
In Subsection \ref{ssec:onedimensional} we include an explicit calculation in the one dimensional inverse setting \eqref{eq:1dipsetting}, which will be used in Subsection \ref{ssec:dalargedimension} to establish the rate of growth of $\rho_d = \dchi(\mu_{1:d} \| \pi_{1:d})$ and thereby how the sample size needs to be scaled along the high dimensional limit $d\to\infty$ to maintain the same accuracy. Finally, in Subsection \ref{ssec:infinited} we establish from first principles and our simple one dimensional calculation the equivalence between $(i)$ certain notion of dimension being finite; $(ii)$ $\rho_\infty< \infty;$ and $(iii)$ absolute continuity of $\mu_{1:\infty}$ with respect to $\pi_{1:\infty}.$

\subsubsection{One Dimensional Setting}\label{ssec:onedimensional}
  Let $a \in \R$ be given and consider the one dimensional inverse problem of reconstructing $u\in \R$ from data $y \in \R$, under the setting
  \begin{equation}\label{eq:1dproblem}
  y = au + \eta, \quad \quad \eta \sim \Nc(0, \gamma^2), \quad \quad  u \sim \pi = \Nc(0,\sigma^2).
  \end{equation}
   By defining
     $$g(u):= \exp\Bigl( -\frac{a^2}{2\gamma^2}u^2 + \frac{ay}{\gamma^2} u  \Bigr) ,$$
    we can write the posterior density $\mu(du)$ as  $\mu(du) \propto g(u) \pi(du).$ The next result gives a simplified closed formula for $\rho = \dchi(\mu \| \pi) + 1.$ In addition, it gives a closed formula for the Hellinger integral 
    $$\mathcal{H}(\mu, \pi) :=\frac{ \pi \bigl(g^{\frac12} \bigr) }{ \pi(g)^{\frac12}},$$
  which will facilitate the study of the case $d = \infty$ in Subsection \ref{ssec:infinited}. 
\begin{lemma}\label{lem: oned}
Consider the inverse problem in \eqref{eq:1dproblem}. Let $\lambda := a^2 \sigma^2/\gamma^2$ and $z^2:= \frac{y^2}{a^2 \sigma^2 + \gamma^2}.$ Then,  for any $\ell >0,$
\begin{align}
    \frac{\pi(g^\ell)}{\pi(g)^\ell} 
    &=\frac{(\lambda+1)^{\frac{\ell}{2}}}{\sqrt{\ell\lambda+1}}\exp\Bigl(\frac{(\ell^2-\ell)\lambda}{2(\ell\lambda+1)}z^2\Bigr).\label{eq:g k moment}
    \end{align}
In particular, 
\begin{align}
    \rho
    &=\frac{\lambda + 1}{ \sqrt{2\lambda + 1} }\exp \Bigl( \frac{\lambda}{2\lambda + 1} z^2 \Bigr),\label{eq:chi2oned}\\
    \mathcal{H}(\mu, \pi)  
    &=\sqrt{\frac{2\sqrt{\lambda+1}}{\lambda+2}} \exp\Bigl(-\frac{\lambda z^2}{4(\lambda+2)} \Bigr).\label{eq:Hellingeroned}
\end{align}
\end{lemma}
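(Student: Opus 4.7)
The plan is to compute $\pi(g^\ell)$ in closed form for arbitrary $\ell>0$ as a one-dimensional Gaussian integral, take the ratio with $\pi(g)^\ell$, and then specialize to $\ell = 2$ for \eqref{eq:chi2oned} and $\ell=1/2$ for \eqref{eq:Hellingeroned}. Since every step amounts to completing the square in a Gaussian, the argument is essentially a bookkeeping exercise in the parameters $\lambda = a^2\sigma^2/\gamma^2$ and $z^2 = y^2/(a^2\sigma^2+\gamma^2)$.

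First I would write
\[
\pi(g^\ell) \;=\; \frac{1}{\sqrt{2\pi}\,\sigma}\int_\R \exp\!\Bigl(-\tfrac{1}{2}\alpha_\ell\, u^2 + \beta_\ell\, u\Bigr)\,du, \qquad \alpha_\ell := \tfrac{\ell a^2}{\gamma^2}+\tfrac{1}{\sigma^2},\quad \beta_\ell := \tfrac{\ell a y}{\gamma^2},
\]
and apply the standard Gaussian identity $\int e^{-\alpha u^2/2+\beta u}du = \sqrt{2\pi/\alpha}\,e^{\beta^2/(2\alpha)}$. Noting that $\sigma^2\alpha_\ell = \ell\lambda+1$, this gives $\pi(g^\ell) = (\ell\lambda+1)^{-1/2}\exp\bigl(\beta_\ell^2/(2\alpha_\ell)\bigr)$. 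Using $a^2\sigma^2 = \lambda\gamma^2$ and the definition of $z^2$, one checks that $y^2/\gamma^2 = (\lambda+1)z^2$ and hence $\beta_\ell^2/(2\alpha_\ell) = \ell^2\lambda(\lambda+1)z^2/[2(\ell\lambda+1)]$. Specializing $\ell = 1$ yields $\pi(g) = (\lambda+1)^{-1/2}\exp(\lambda z^2/2)$.

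Next I would divide: the prefactor immediately becomes $(\lambda+1)^{\ell/2}/\sqrt{\ell\lambda+1}$, while the exponent simplifies as
\[
\frac{\ell^2\lambda(\lambda+1)}{2(\ell\lambda+1)} - \frac{\ell\lambda}{2} \;=\; \frac{\ell\lambda\bigl[\ell(\lambda+1)-(\ell\lambda+1)\bigr]}{2(\ell\lambda+1)} \;=\; \frac{(\ell^2-\ell)\lambda}{2(\ell\lambda+1)},
\]
which is exactly \eqref{eq:g k moment}. Plugging $\ell=2$ gives $\rho=\pi(g^2)/\pi(g)^2$ as in \eqref{eq:chi2oned}. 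For \eqref{eq:Hellingeroned}, setting $\ell=1/2$ produces prefactor $(\lambda+1)^{1/4}/\sqrt{\lambda/2+1} = \sqrt{2\sqrt{\lambda+1}/(\lambda+2)}$ and exponent $-\lambda z^2/[4(\lambda+2)]$, matching the claim.

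There is no real obstacle: the only thing to watch is keeping the substitutions $a^2\sigma^2 = \lambda\gamma^2$ and $y^2 = (\lambda+1)\gamma^2 z^2$ consistent so that everything collapses neatly into the parameters $(\lambda,z^2)$. If any care is needed, it is in verifying that the integrability condition $\alpha_\ell > 0$ holds for all $\ell>0$, which is immediate since $\lambda>0$.
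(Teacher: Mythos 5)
Your proof is correct and follows essentially the same route as the paper: a single Gaussian completion-of-the-square to get $\pi(g^\ell)$ in the parameters $(\lambda,z^2)$, followed by taking the ratio and specializing $\ell=2$ and $\ell=1/2$. The only cosmetic difference is that the paper computes $\pi(g)$ once and obtains $\pi(g^\ell)$ by the substitution $\gamma^2\mapsto\gamma^2/\ell$, $\lambda\mapsto\ell\lambda$, whereas you carry general $\ell$ through the integral directly; both yield identical formulas.
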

\begin{proof}
A direct calculation shows that 
$$  \post(g) = \frac{1}{\sqrt{\lambda + 1}} \exp\Bigl( \frac12 \frac{\lambda y^2}{a^2 \sigma^2 + \gamma^2}  \Bigr).$$
The same calculation, but replacing $\gamma^2$ by $\gamma^2/\ell$ and $\lambda$ by $\ell\lambda$, gives similar expressions for $\pi(g^\ell)$, which leads to (\ref{eq:g k moment}). The other two equations follow by setting $\ell$ to be $2$ and $\frac{1}{2}$. 
\end{proof}

Lemma \ref{lem: oned} will be used in the two following subsections to study high dimensional limits. Here we show how this lemma also allows us to verify directly that the assumption $\mathcal{V}<1$ in Theorem \ref{thm:nece suff summary} holds in small noise and large prior limits.

\begin{example}\label{example: 1-d gaussian, verify assumption of sup var/rho2}
 Consider a sequence of inverse problems of the form \eqref{eq:1dproblem} with $\lambda = a^2 \sigma^2/\gamma^2$ approaching infinity. Let $\{(\mu_\lambda,\pi_\lambda)\}_{\lambda>0}$ be the corresponding family of posteriors and priors and let $\g_\lambda$ be the normalized density.  Lemma \ref{lem: oned} implies that 
    \begin{align*}
   \frac{ \pi_\lambda(\g_\lambda^3)}{\pi_\lambda(\g_\lambda^2)^2}
    =&\frac{2\lambda+1}{\sqrt{(3\lambda+1)(\lambda+1)}}\exp\Bigl(\frac{\lambda}{(2\lambda+1)(3\lambda+1)}z^2\Bigr)
    \to\frac{2}{\sqrt{3}}
    <2,
    \end{align*}
as $\lambda \to \infty$. This implies that, for $\lambda$ sufficiently large,
    \[\frac{\V[\g_\lambda(U_\lambda)]}{\rho_\lambda^2}=\frac{\pi_\lambda(\g_\lambda^3)}{\pi_\lambda(\g_\lambda^2)^2}-1<1.  \]
\qed
\end{example}

\subsubsection{ Large Dimensional Limit}
Now we investigate the behavior of importance sampling in the limit of large dimension, in the inverse problem setting \eqref{eq:ddimip}. We start with an example similar to the ones in Figure $\ref{fig:Noise Scaling d=5}$ and Figure $\ref{fig:Prior Scaling d=5}$. Figure \ref{fig:Dimensional Scaling lambda=1.3} shows that for $\lambda=1.3$ fixed, weight collapse happens frequently when the sample size $N$ grows polynomially as $d^2$, but not so often if  $N$ grows at rate
$\bigO\left(\prod_{i=1}^d\left(\frac{\lambda+1}{\sqrt{2\lambda+1}}e^{\frac{\lambda z_i^2}{2\lambda +1}}\right)\right)$. 
Similar histograms for $\lambda=2.4$ are included in Appendix \ref{AppendixC}. These empirical results are in agreement with the growth rate of $\rho_d$ in the large $d$ limit.

\FloatBarrier
    
\begin{figure}[!htb] 
    \begin{minipage}{.5\textwidth}
       
        \includegraphics[width=1\linewidth, height=0.255\textheight]{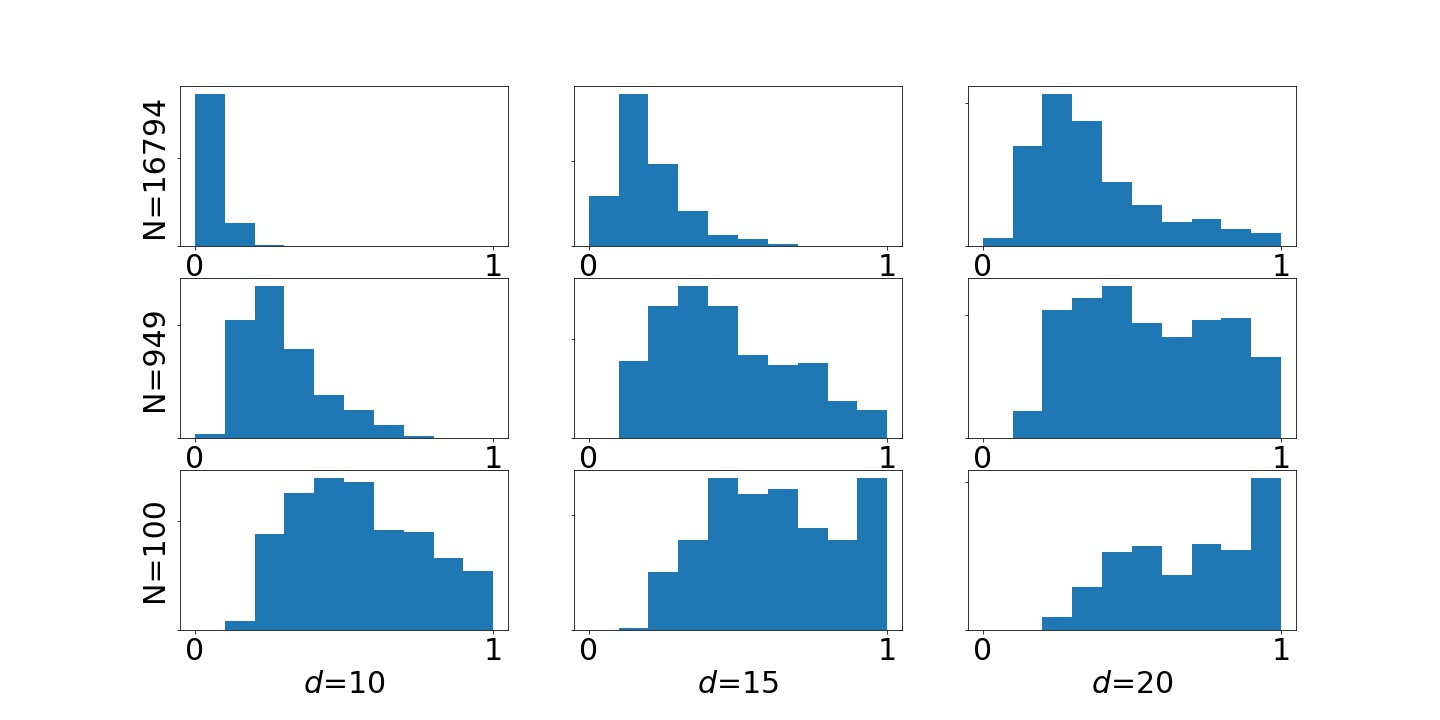}
        \subcaption{$N=\bigO\left(\prod_{i=1}^d\left(\frac{\lambda+1}{\sqrt{2\lambda+1}}e^{\frac{\lambda z_i^2}{2\lambda +1}}\right)\right)$.}
        \label{fig:dim no collapse}
    \end{minipage}%
    \begin{minipage}{0.5\textwidth}
       
        \includegraphics[width=1\linewidth, height=0.255\textheight]{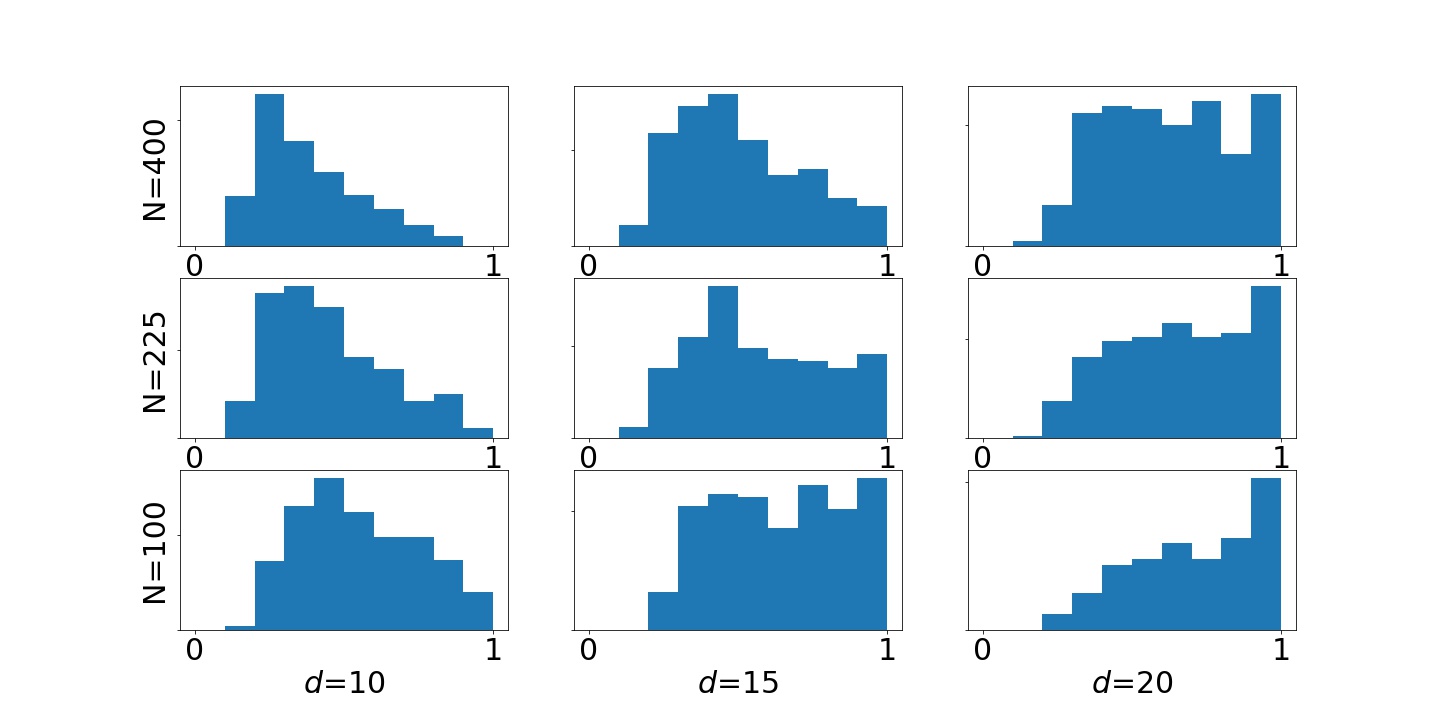}
        \subcaption{$N=d^2$.}
        \label{fig:dim collapse}
    \end{minipage}
    \caption{Dimensional scaling $\lambda = 1.3.$ \label{fig:Dimensional Scaling lambda=1.3}}
\end{figure}

\FloatBarrier

\begin{proposition}\label{thm: chi2}
For any $d \in \N \cup \{\infty \},$
\begin{align*}
\rho_d  &= \prod_{i=1}^d\left(\frac{\lambda_i+1}{\sqrt{2\lambda_i+1}}e^{\frac{\lambda_iz_i^2}{2\lambda_i+1}}\right),  \\
\E_{z_{1:d}}\left[\rho_d\right] &= \prod_{i = 1}^d (\lambda_i+1).
\end{align*}
\end{proposition}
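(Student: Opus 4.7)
The plan is to exploit the product structure in \eqref{eq:product structure}. Writing $g_i$ for the unnormalized density of $\mu_i$ with respect to $\pi_i$ given in Subsection \ref{ssec:onedimensional}, independence gives $g_{1:d}(u) = \prod_{i=1}^d g_i(u_i)$ and therefore
\[
\rho_d = \frac{\pi_{1:d}(g_{1:d}^2)}{\pi_{1:d}(g_{1:d})^2} = \prod_{i=1}^d \frac{\pi_i(g_i^2)}{\pi_i(g_i)^2} = \prod_{i=1}^d \rho_i.
\]
The first identity in the proposition is then immediate from the closed form \eqref{eq:chi2oned} in Lemma \ref{lem: oned} applied componentwise, where $\lambda_i = a_i^2\sigma_i^2/\gamma_i^2$ and $z_i^2 = y_i^2/(a_i^2\sigma_i^2 + \gamma_i^2)$. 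For $d = \infty$, the identity is interpreted as the monotone limit of the partial products, taking the value $+\infty$ if the product diverges.

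For the second identity, I would note that under the joint prior-observation model \eqref{eq:1dipsetting} the marginal distribution of $y_i$ is $\Nc(0, a_i^2\sigma_i^2 + \gamma_i^2)$, so $z_i \sim \Nc(0,1)$ and the $z_i$ are mutually independent. By independence and Tonelli,
\[
\E_{z_{1:d}}[\rho_d] = \prod_{i=1}^d \E_{z_i}\Bigl[\frac{\lambda_i + 1}{\sqrt{2\lambda_i + 1}} \exp\Bigl(\tfrac{\lambda_i}{2\lambda_i+1} z_i^2\Bigr)\Bigr].
\]
The key ingredient is the Gaussian identity $\E[\exp(tZ^2)] = (1-2t)^{-1/2}$ for $Z\sim\Nc(0,1)$ and $t<1/2$. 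Applied with $t = \lambda_i/(2\lambda_i+1)$, so that $1-2t = 1/(2\lambda_i+1)$, this yields $\E_{z_i}[\exp(\tfrac{\lambda_i}{2\lambda_i+1} z_i^2)] = \sqrt{2\lambda_i+1}$. Plugging back in cancels the $\sqrt{2\lambda_i+1}$ in the denominator and leaves $\E_{z_i}[\rho_i] = \lambda_i + 1$.

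There is no real obstacle here beyond verifying that the Gaussian moment-generating identity applies (the condition $t < 1/2$ reduces to $\lambda_i < 2\lambda_i + 1$, which is automatic) and being careful that the product-over-$i$ and expectation-over-$z_{1:d}$ can be interchanged; for $d$ finite this is trivial, and for $d = \infty$ monotone convergence handles both sides simultaneously.
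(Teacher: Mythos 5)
Your proof is correct and follows essentially the same route as the paper: the first identity via the product structure and the componentwise formula \eqref{eq:chi2oned}, and the second via the expectation of $e^{tz_i^2}$ for $z_i\sim\Nc(0,1)$ with $t=\lambda_i/(2\lambda_i+1)$, which the paper carries out as an explicit Gaussian integral rather than by quoting the moment-generating identity. Your added remarks on why $z_i\sim\Nc(0,1)$ and on the $d=\infty$ limit make explicit what the paper leaves implicit, but the substance is identical.
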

\begin{proof}
The formula for $\rho_d$ is a direct consequence of Equation \eqref{eq:chi2oned} and the product structure. Similarly, we have
    \begin{align*}
        \E_{z_i}\left[\frac{\lambda_i+1}{\sqrt{2\lambda_i+1}}e^{\frac{\lambda_iz_i^2}{2\lambda_i+1}}\right]
        &=\frac{\lambda_i+1}{\sqrt{2\lambda_i+1}}\int_{\R} \frac{1}{\sqrt{2\pi}}e^{-\frac{z_i^2}{2}+\frac{\lambda_iz_i^2}{2\lambda_i+1}}dz_i\\
        &=\frac{\lambda_i+1}{\sqrt{2\lambda_i+1}}\int_{\R} \frac{1}{\sqrt{2\pi}}e^{-\frac{z_i^2}{2(2\lambda_i+1)}}dz_i\\
        &=\lambda_i+1.
    \end{align*}
\end{proof}

Proposition  \ref{thm: chi2} implies that, for $d \in \N \cup \{\infty \},$
\begin{align*}
\sup_{|\phi|_\infty\leq 1}\Expect \left[ \bigl(\muis^N_{1:d}(\phi)-\mu_{1:d}(\phi)\bigr)^2 \right]  & \le 4  \prod_{i=1}^d\left(\frac{\lambda_i+1}{\sqrt{2\lambda_i+1}}e^{\frac{\lambda_iz_i^2}{2\lambda_i+1}}\right), \\
\E \Biggl[ \sup_{|\phi|_\infty\leq 1}\Expect \left[ \bigl(\muis^N_{1:d} (\phi)-\mu_{1:d}(\phi)\bigr)^2 \right]  \Biggr]  & \le 4  \prod_{i = 1}^d (\lambda_i + 1). 
\end{align*}
Note that the outer expected value in the latter equation averages over the data, while the inner one averages over sampling from the prior $\pi_{1:d}$. This suggests that 
$$\log \E \Biggl[ \sup_{|\phi|_\infty\leq 1}\Expect \left[ \bigl(\muis^N_{1:d} (\phi)-\mu_{1:d}(\phi)\bigr)^2 \right] \Biggr] \lesssim \sum_{i=1}^d \lambda_i.  $$
The quantity $\tau := \sum_{i=1}^d\lambda_i$ had been used as an \emph{intrinsic dimension} of the inverse problem \eqref{eq:ddimip}. This simple heuristic together with Theorem \ref{thm:nece suff summary} suggest that increasing $N$ \emph{exponentially} with $\tau$ is both necessary and sufficient to maintain accurate importance sampling along the high dimensional limit $ d\to \infty.$ In particular, if all coordinates of the problem play the same role, this implies that $N$ needs to grow exponentially with $d$, a manifestation of the curse of dimension of importance sampling \cite{agapiou2017importance,BBL08,BLB08}.  

\subsubsection{Infinite Dimensional Singularity}\label{ssec:infinited}
	Finally, we investigate the case $d = \infty.$ Our goal in this subsection is to establish a connection between the effective dimension, the quantity $\rho_\infty$, and absolute continuity. The main result, Corollary \ref{cor: dimensionality 3 equivalence}, had been proved in more generality in \cite{agapiou2017importance}. However, our proof and presentation here requires minimal technical background and is based on the explicit calculations obtained in the previous subsections and in the following lemma. 
	\begin{lemma}\label{lem: hellinger integral abs cts}
	It holds that $\mu_{1:\infty}$ is absolutely continuous with respect to $\pi_{1:\infty}$ if and only if 
	\begin{equation}
\mathcal{H}(\mu_{1:\infty}, \pi_{1:\infty})  =	\prod_{i=1}^\infty \frac{\pi_i \bigl(g_i ^{\frac12} \bigr) }{ \pi_i(g_i)^{\frac12}}  >0,
	\end{equation}
	where $g_i$ is an unnormalized density between $\mu_i$ and $\pi_i.$
	Moreover, we have the following explicit characterizations of the Hellinger integral $\mathcal{H}(\mu_{1:\infty}, \pi_{1:\infty})$ and its average with respect to data realizations, 
	\begin{align*}
		\mathcal{H}(\mu_{1:\infty}, \pi_{1:\infty})  &=  \prod_{i=1}^\infty \left( \sqrt{\frac{2\sqrt{\lambda_i+1}}{\lambda_i+2}} e^{-\frac{\lambda_i z_i^2}{4(\lambda_i+2)}}  \right),   \\
        \E_{z_{1:\infty}}\left[\mathcal{H}(\mu_{1:\infty}, \pi_{1:\infty})\right] &= \prodi\frac{2(\lambda_i+1)^{\frac{1}{4}}}{\sqrt{3\lambda_i+4}}.
    \end{align*}
	\end{lemma}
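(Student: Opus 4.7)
The plan is to exploit the product structure $\mu_{1:\infty} = \prod_i \mu_i$ and $\pi_{1:\infty} = \prod_i \pi_i$, together with Lemma \ref{lem: oned}, to reduce every claim to a one-dimensional calculation. First I would verify the factorization
\[\frac{\pi_{1:d}(g_{1:d}^{1/2})}{\pi_{1:d}(g_{1:d})^{1/2}} = \prod_{i=1}^d \frac{\pi_i(g_i^{1/2})}{\pi_i(g_i)^{1/2}} = \prod_{i=1}^d \mathcal{H}(\mu_i, \pi_i),\]
which follows from $g_{1:d}(u_{1:d}) = \prod_{i=1}^d g_i(u_i)$ by Fubini applied to numerator and denominator separately. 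Since each single-factor Hellinger affinity lies in $[0,1]$ by Cauchy--Schwarz, the partial products are monotone nonincreasing, so $\mathcal{H}(\mu_{1:\infty}, \pi_{1:\infty})$ is unambiguously defined as the limit $\prod_{i=1}^\infty \mathcal{H}(\mu_i, \pi_i)$.

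For the equivalence between positivity of this product and absolute continuity, my plan is a self-contained Kakutani-style martingale argument. Introduce $L_d := \prod_{i=1}^d g_i(u_i)/\pi_i(g_i)$, a mean-one positive $\pi_{1:\infty}$-martingale with respect to the filtration generated by the first $d$ coordinates, so $L_d \to L_\infty$ almost surely under $\pi_{1:\infty}$. For $m > n$, independence of the blocks $\{1,\ldots,n\}$ and $\{n+1,\ldots,m\}$ gives $\E_{\pi_{1:\infty}}[\sqrt{L_n L_m}] = \prod_{i=n+1}^m \mathcal{H}(\mu_i, \pi_i)$, whence
\[\bigl\| L_n^{1/2} - L_m^{1/2} \bigr\|_{L^2(\pi_{1:\infty})}^2 = 2\Bigl(1 - \prod_{i=n+1}^m \mathcal{H}(\mu_i, \pi_i)\Bigr).\]
If $\prod_i \mathcal{H}(\mu_i,\pi_i) > 0$, the tail products tend to one, so $\{L_d^{1/2}\}$ is Cauchy in $L^2$; combined with the a.s.\ limit this gives $L_d \to L_\infty$ in $L^1(\pi_{1:\infty})$ via the identity $|L_d - L_\infty| = |L_d^{1/2} - L_\infty^{1/2}|(L_d^{1/2} + L_\infty^{1/2})$ and Cauchy--Schwarz. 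Then $L_\infty$ agrees with $d\mu_{1:\infty}/d\pi_{1:\infty}$ on cylinder sets, yielding $\mu_{1:\infty} \ll \pi_{1:\infty}$. Conversely, if the product vanishes, $\E_{\pi_{1:\infty}}[L_\infty^{1/2}] = 0$ forces $L_\infty = 0$ a.s.\ under $\pi_{1:\infty}$; by Lévy's upward convergence theorem any Radon--Nikodym density would coincide with this a.s.\ limit, contradicting that $\mu_{1:\infty}$ is a probability measure.

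The first explicit formula follows immediately by substituting equation \eqref{eq:Hellingeroned} of Lemma \ref{lem: oned} into the product. For the averaged formula, I would observe that under the data-generating model $y_i = a_i u_i + \eta_i$ the marginal of $y_i$ is $\Nc(0, a_i^2\sigma_i^2 + \gamma_i^2)$, so the normalized quantities $z_i = y_i/\sqrt{a_i^2\sigma_i^2 + \gamma_i^2}$ are i.i.d.\ standard normal. Because every Hellinger factor lies in $[0,1]$, dominated convergence and independence permit interchanging $\E_{z_{1:\infty}}$ with the infinite product, reducing the task to a single Gaussian integral
\[\E_{z_i}\bigl[\mathcal{H}(\mu_i, \pi_i)\bigr] = \sqrt{\frac{2\sqrt{\lambda_i+1}}{\lambda_i+2}} \int_\R \frac{1}{\sqrt{2\pi}} \exp\Bigl(-\frac{z_i^2}{2} - \frac{\lambda_i z_i^2}{4(\lambda_i+2)}\Bigr)\,dz_i = \frac{2(\lambda_i+1)^{1/4}}{\sqrt{3\lambda_i+4}},\]
where the last equality uses $1 + \lambda_i/(2(\lambda_i+2)) = (3\lambda_i+4)/(2(\lambda_i+2))$.

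The hard part will be the middle step: upgrading almost-sure convergence of $L_d$ to $L^1$ convergence in a way that detects positivity of the Hellinger product. The identity for $\E[\sqrt{L_n L_m}]$ and the resulting $L^2$-Cauchy criterion are the crucial tools, and the cylinder-set identification of $L_\infty$ with the Radon--Nikodym derivative requires a brief Dynkin-class argument. The Gaussian integrals and the product factorization are routine by comparison.
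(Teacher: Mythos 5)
Your proof is correct and follows essentially the same route as the paper's Appendix~B argument: the decisive step in both is the identity $\bigl\|\sqrt{L_n}-\sqrt{L_m}\bigr\|_{L^2(\pi_{1:\infty})}^2 = 2\bigl(1-\prod_{i=n+1}^m \mathcal{H}(\mu_i,\pi_i)\bigr)$, which turns positivity of the Hellinger product into an $L^1$-Cauchy criterion for the partial densities, and the explicit formulas follow from Lemma~\ref{lem: oned} and the same Gaussian integral. The only substantive differences are cosmetic: you package the argument in Kakutani/martingale language (a.s.\ convergence plus L\'evy's upward theorem and Fatou for the converse by contraposition), whereas the paper works directly with $L^1_\pi$-Cauchy sequences and deduces the forward direction from the tensorization $\prod_i\pi_i(\sqrt{g_i})=\pi_{1:\infty}(\sqrt{g_{1:\infty}})$; your version is, if anything, slightly more careful about identifying the limit with the Radon--Nikodym derivative on cylinder sets.
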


\begin{proof}
The formula for the Hellinger integral is a direct consequence of Equation \eqref{eq:Hellingeroned} and the product structure.
 On the other hand, 
    \begin{align*}
        \E_{z_i}\left[\sqrt{\frac{2\sqrt{\lambda_i+1}}{\lambda_i+2}} e^{-\frac{\lambda_i z_i^2}{4(\lambda_i+2)}}\right]
        &= \frac{\sqrt{2}(\lambda_i+1)^{\frac{1}{4}}}{\sqrt{\lambda_i+2}}\int_{\R}\frac{1}{\sqrt{2\pi}}e^{-\frac{\lambda_iz_i^2}{4(\lambda_i+2)}-\frac{z_i^2}{2}}dz_i\\
        &=\frac{2(\lambda_i+1)^{\frac{1}{4}}}{\sqrt{3\lambda_i+4}}.
    \end{align*}
    The proof of the equivalence between finite Hellinger integral and absolute continuity is given in Appendix \ref{aappendixB}.
\end{proof}

\begin{corollary}\label{cor: dimensionality 3 equivalence}
The following statements are equivalent:
	\begin{enumerate}[(i)]
		\item $\tau=\sumi \lambda_i <\infty$;
		\item $\rho_\infty<\infty$ for almost every $y$;
		\item $\mu_{1:\infty}\ll\pi_{1:\infty}$ for almost every $y$.
	\end{enumerate}
\end{corollary}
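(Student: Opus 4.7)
The plan is to express each of (ii) and (iii) as convergence of an explicit nonnegative series in the $\lambda_i$ and $z_i^2$, and then reduce both equivalences to (i) via a Kolmogorov-type argument. By Proposition \ref{thm: chi2}, $\rho_\infty<\infty$ is equivalent to
$$\sum_{i=1}^\infty \left[ \log\frac{\lambda_i+1}{\sqrt{2\lambda_i+1}} + \frac{\lambda_i z_i^2}{2\lambda_i+1} \right] < \infty,$$
and by Lemma \ref{lem: hellinger integral abs cts}, (iii) is equivalent to $\mathcal{H}(\mu_{1:\infty},\pi_{1:\infty})>0$, which, after taking $-\log$ of the product formula, is equivalent to
$$\sum_{i=1}^\infty \left[ \tfrac{1}{2} \log\frac{\lambda_i+2}{2\sqrt{\lambda_i+1}} + \frac{\lambda_i z_i^2}{4(\lambda_i+2)} \right] < \infty.$$
The inequalities $(\lambda+1)^2\geq 2\lambda+1$ and $(\lambda+2)^2\geq 4(\lambda+1)$, both AM--GM, make every summand nonnegative, so each series splits cleanly into an entirely deterministic part and an entirely random part.

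First I would dispose of the case where $\lambda_i$ does not go to $0$: if $\lambda_{i_k}\geq c>0$ along a subsequence, then since the $z_i\sim\mathcal{N}(0,1)$ are iid we have $\limsup z_i^2=\infty$ a.s., forcing the random part of either series to diverge almost surely. In this case all three of (i)--(iii) fail, so I may henceforth assume $\lambda_i\to 0$. Taylor expansion at the origin then gives
$$\log\frac{\lambda+1}{\sqrt{2\lambda+1}} = \tfrac{\lambda^2}{2}+O(\lambda^3), \qquad \log\frac{\lambda+2}{2\sqrt{\lambda+1}} = \tfrac{\lambda^2}{8}+O(\lambda^3),$$
so the deterministic parts of both series are controlled by a constant times $\sum_i \lambda_i^2$, which is automatically finite whenever (i) holds (since $\lambda_i$ is bounded, $\lambda_i^2 \leq (\sup_j \lambda_j)\lambda_i$).

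What remains is to show that for any nonnegative sequence $(c_i)$ with $c_i\asymp\lambda_i$ as $\lambda_i\to 0$ (in particular $c_i=\lambda_i/(2\lambda_i+1)$ and $c_i=\lambda_i/(4(\lambda_i+2))$), the random series $\sum_{i=1}^\infty c_i z_i^2$ is a.s. finite if and only if $\sum_{i=1}^\infty c_i<\infty$. The ``if'' direction is Tonelli, since $\E[\sum c_i z_i^2]=\sum c_i$. For ``only if'', Kolmogorov's three-series theorem combined with the bound $\E[\min(c_i z_i^2, 1)]\asymp c_i$ for bounded $c_i$ shows that $\sum c_i=\infty$ forces the criterion to fail; by the Kolmogorov zero--one law and monotonicity, the nonnegative sum then diverges a.s. Combining these steps yields (i)~$\Leftrightarrow$~(ii) and (i)~$\Leftrightarrow$~(iii) for almost every $y$. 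The main obstacle is the last probabilistic equivalence; once it is in hand, the remainder is algebraic bookkeeping with the explicit formulas already derived in Lemma \ref{lem: oned}, Proposition \ref{thm: chi2} and Lemma \ref{lem: hellinger integral abs cts}.
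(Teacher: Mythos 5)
Your proof is correct, and it takes a genuinely different route through the probabilistic step than the paper does. Both arguments start from the same explicit product formulas (Proposition \ref{thm: chi2} for $\rho_\infty$ and Lemma \ref{lem: hellinger integral abs cts} for $\mathcal{H}$), both observe that every factor lies on the correct side of $1$ by the AM--GM inequalities $(\lambda+1)^2\ge 2\lambda+1$ and $(\lambda+2)^2\ge 4(\lambda+1)$, and both reduce to the regime $\lambda_i\to 0$ where the deterministic logarithmic terms are $\Theta(\lambda_i^2)$ and hence harmless. The divergence is in how the randomness in $y$ is handled: the paper integrates it out, computing $\E_{z_{1:\infty}}[\rho_\infty]=\prod_i(1+\lambda_i)$ and $\E_{z_{1:\infty}}[\mathcal{H}]$ and noting that the logarithms of these are comparable to $\sumi\lambda_i$; you instead fix $y$ and analyze the random series $\sumi c_i z_i^2$ directly via Kolmogorov's three-series theorem, the estimate $\E[\min(c_iz_i^2,1)]\asymp c_i$, and the zero--one law. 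Your route costs more probabilistic machinery but buys a cleaner logical closure of the ``necessity'' direction: for $\rho_\infty$ the paper's expectation computation only immediately gives that $\sumi\lambda_i<\infty$ implies $\rho_\infty<\infty$ a.s. (the converse, that $\E_y[\rho_\infty]=\infty$ forces $\rho_\infty=\infty$ for a.e.\ $y$, requires exactly the kind of pathwise or zero--one-law argument you supply, or a detour through the Hellinger equivalence). The expectation argument is fully watertight only for $\mathcal{H}$, where the uniform bound $\mathcal{H}\in[0,1]$ makes $\E[\mathcal{H}]=0$ equivalent to $\mathcal{H}=0$ a.s. One small remark: your preliminary case split on $\lambda_i\not\to 0$ is not strictly needed, since the three-series criterion with $\E[\min(c_iz_i^2,1)]$ bounded below for $c_i$ bounded away from $0$ already covers it, but it does no harm and matches the paper's opening observation.
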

\begin{proof}
    Observe that $\lambda_i\to 0$ is a direct consequence of all three statements, so we will assume $\lambda_i\to 0$ from now on.\\
    $(i)\logeq(ii):$
    By Proposition \ref{thm: chi2}, 
        \begin{align*}
            \log\Bigl(\E_{z_{1:\infty}}\left[\rho_\infty\right]\Bigr)=\sumi\log(1+\lambda_i)=\bigO(\sumi\lambda_i),
        \end{align*}
    since $\log(1+\lambda_i)\approx\lambda_i$ for large $i$.\\
    $(i)\logeq(iii):$
    Similarly, we have
        \begin{align*}
            \log\Bigl(\E_{z_{1:\infty}}\left[\mathcal{H}(\mu_{1:\infty}, \pi_{1:\infty})\right]\Bigr) 
            &= -\frac{1}{4}\sumi\log\frac{(3\lambda_i+4)^2}{16(\lambda_i+1)}\\
            &= -\frac{1}{4}\sumi\log\left(1+\frac{9\lambda_i^2+8\lambda_i}{16\lambda_i+16}\right)\\
            &= -\frac{1}{4}\bigO(\sumi\lambda_i).
        \end{align*}
    The conclusion follows from Lemma \ref{lem: hellinger integral abs cts}.
\end{proof}


\section{Importance Sampling for Data Assimilation}\label{sec:4}
In this section, we study the use of importance sampling in a particle filtering setting. Following  \cite{BBL08,BLB08,snyder2008obstacles} we focus on one filtering step. Our goal is to provide a new and concrete comparison of two proposals, referred to as \emph{standard} and \emph{optimal} in the literature \cite{agapiou2017importance}. In Subsection \ref{ssec:onestepsetting} we introduce the setting and both proposals, and show that the $\chi^2$-divergence between target and standard proposal is larger than the $\chi^2$-divergence between target and optimal proposal. Subsections \ref{ssec:dasmallnoise} and \ref{ssec:dalargedimension} identify small noise and large dimensional limiting regimes where the sample size for the standard proposal needs to grow unboundedly to maintain the same level of accuracy, but the required sample size for the optimal proposal remains bounded. 
\subsection{One-step Filtering Setting}\label{ssec:onestepsetting}
Let $M$ and $H$ be given matrices. We consider the one-step filtering problem of recovering $v_0,v_1$ from $y$, under the following setting
    \begin{align}
   v_1 &=Mv_0+\xi,\label{eq: filt step 1} \quad \quad
    v_0\sim\Nc(0,P), \quad
    \xi\sim\Nc(0,Q),\\
    y &=H v_1+\zeta,\label{eq: filt step 2} \quad \quad
    \zeta \sim\Nc(0,R).
    \end{align}
Similar to the setting in Subsection \ref{ssec:ipsetting}, we assume that $P,Q,R$ are symmetric positive definite and that $M$ and $H$ are full rank. From a Bayesian point of view, we would like to sample from the target distribution $\Prob_{v_0,v_1|y}$. To achieve this, we can either use $\pis=\Prob_{v_1|v_0}\Prob_{v_0}$ or $\pio=\Prob_{v_1|v_0,y}\Prob_{v_0}$ as the proposal distribution.

The standard proposal $\pis$ is the prior distribution of $(v_0,v_1)$ determined by the prior $v_0 \sim \Nc(0,P)$ and the \emph{signal dynamics} encoded in  Equation (\ref{eq: filt step 1}). Then assimilating the observation $y$ leads to an inverse problem \cite{agapiou2017importance,sanzstuarttaeb} with design matrix, noise covariance, and prior covariance given by     
    \begin{align}\label{eq:standardproposalsdef}
    \begin{split}
    \As &:=H, \\ 
    \Gammas &:=R, \\
    \Sigmas &:=MPM'+Q. 
    \end{split}
    \end{align}
    We denote $\pis = \Nc(0, \Sigmas)$ the prior distribution and by $\mus$ the corresponding posterior distribution. 

The optimal proposal $\pio$ samples from $v_0$ and the conditional kernel $v_1|v_0,y.$ Then assimilating $y$ leads to the inverse problem \cite{agapiou2017importance,sanzstuarttaeb}
    \[y=HMv_0+H\xi+\zeta,\]
where the design matrix, noise covariance and prior covariance are given by     
    \begin{align}\label{eq:optimalproposalsdef}
    \begin{split}
    \Ao &:=HM, \\
    \Gammao &:=HQH'+R, \\ 
    \Sigmao &:=P. 
    \end{split}
    \end{align}
We denote $\pio = \Nc(0, \Sigmao)$ the prior distribution and $\mus$ the corresponding posterior distribution.

\subsection{\texorpdfstring{$\chi^2$}{chi2}
-divergence Comparison between Standard and Optimal Proposal}
Here we show that $$\rhos:= \dchi(\mus \| \pis)  + 1 > \dchi(\muo \| \pio)  + 1 =: \rhoo.$$ The proof is a direct calculation using the explicit formula in Proposition \ref{prop:inverse_problem_chi2}.
We introduce, as in Section \ref{sec:3}, standard Kalman notation
  \begin{align*}
  \Ks := \Sigmas \As' \Ss^{-1}, \quad \quad \quad \Ss := \As \Sigmas \As' + \Gammas, \\
    \Ko := \Sigmao \Ao' \So^{-1}, \quad \quad \quad \So := \Ao \Sigmao \Ao' + \Gammao.
\end{align*}   
It follows from the definitions in \eqref{eq:standardproposalsdef} and \eqref{eq:optimalproposalsdef} that
\begin{align*}
\Ss &= H(MPM' + Q)H + R \\
 &= HM  P M' H + HQH' + R \\
 & = \So.
\end{align*}
Since  $\Ss = \So$ we drop the subscripts in what follows, and denote both simply by $S.$


    \begin{theorem}\label{prop: part filt chi2}
    Consider the one-step filtering setting in Equations \eqref{eq: filt step 1} and \eqref{eq: filt step 2}.  If $M$ and $H$ are full rank and $P,Q,R$ are symmetric positive definite, then, for almost every $y,$
        \[\rhos>\rhoo. \]
    \end{theorem}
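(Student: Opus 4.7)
My plan is to bypass the explicit closed-form of Proposition \ref{prop:inverse_problem_chi2} and instead exploit the probabilistic structure of the two proposals on the joint space of $(v_0,v_1)$. The punchline will be a single application of Jensen's inequality: the two proposals share the same normalizing constant $\pi(g)$, so the comparison of $\rhos$ and $\rhoo$ reduces to the comparison of the second moments $\pi(g^2)$.

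First I would identify the unnormalized Radon--Nikodym derivatives. For $\pis(dv_0,dv_1)=\Prob(dv_0)\Prob(dv_1\mid v_0)$, Bayes' rule gives $g_s(v_0,v_1)=\Prob(y\mid v_1)\propto\exp\!\bigl(-\tfrac12(y-Hv_1)'R^{-1}(y-Hv_1)\bigr)$. For $\pio(dv_0,dv_1)=\Prob(dv_0)\Prob(dv_1\mid v_0,y)$ the conditional kernel cancels, leaving $g_o(v_0,v_1)=\Prob(y\mid v_0)\propto\exp\!\bigl(-\tfrac12(y-HMv_0)'(HQH'+R)^{-1}(y-HMv_0)\bigr)$. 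Because $g_s$ depends only on $v_1$ and $g_o$ only on $v_0$, these joint $\chi^2$-divergences agree with the marginal ones analyzed in Proposition \ref{prop:inverse_problem_chi2}. Integrating each unnormalized density against its proposal recovers the marginal data density, so $\pis(g_s)=\pio(g_o)=\Prob(y)$, and hence $\rhos/\rhoo=\pis(g_s^2)/\pio(g_o^2)$.

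The decisive step is a conditional Jensen inequality. Using the tower property $\Prob(y\mid v_0)=\E[\Prob(y\mid v_1)\mid v_0]$, together with the fact that under $\Prob(\cdot\mid v_0)$ the law of $v_1$ is $\Nc(Mv_0,Q)$ (non-degenerate since $Q\succ 0$) and that $v_1\mapsto\Prob(y\mid v_1)$ is non-constant for every $y$ (since $H$ is full rank and $R\succ 0$), strict Jensen yields $\Prob(y\mid v_0)^2<\E[\Prob(y\mid v_1)^2\mid v_0]$ for every $v_0$. Integrating over $v_0\sim\Prob(dv_0)$ and applying the tower property once more gives $\pio(g_o^2)<\pis(g_s^2)$; dividing by $\Prob(y)^2$ delivers $\rhoo<\rhos$ pointwise in $y$, which is a touch stronger than what the theorem claims.

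The main obstacle, which is essentially bookkeeping, is to verify that the formula $\rho=\pi(g^2)/\pi(g)^2$ on the joint space of $(v_0,v_1)$ produces the same numerical values of $\rhos$ and $\rhoo$ as the reduced marginal formulas underlying Proposition \ref{prop:inverse_problem_chi2}; this is immediate once one notes that $g_s$ (resp.\ $g_o$) depends only on $v_1$ (resp.\ $v_0$), so the extraneous variables integrate out trivially. A more computational alternative would apply Proposition \ref{prop:inverse_problem_chi2} directly, combining the Loewner inequality $S^{-1/2}(HMPM'H'+HQH')S^{-1/2}\succ S^{-1/2}HMPM'H'S^{-1/2}$ (which orders the eigenvalues of $\Ks\As$ strictly above those of $\Ko\Ao$, making the determinant factor of $\rhos$ dominate) with a separate comparison of the exponential factor in $y$; the Jensen argument above finesses the latter entirely.
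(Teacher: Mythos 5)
Your proof is correct, but it takes a genuinely different route from the paper's. The paper applies the closed formula of Proposition \ref{prop:inverse_problem_chi2} to both reduced inverse problems and establishes two matrix inequalities by hand: a determinant comparison, via $\Ko\Ao\prec\Ks\As\prec I$ so that $|I-(\Ks\As)^2|<|I-(\Ko\Ao)^2|$, and a Loewner comparison of the quadratic forms in the exponent, via $(MPM'+Q)^{-1}\prec(MPM')^{-1}$. You instead lift everything to the joint space of $(v_0,v_1)$, identify $g_s(v_0,v_1)=\Prob(y\mid v_1)$ and $g_o(v_0,v_1)=\Prob(y\mid v_0)$ with the common normalizer $\pis(g_s)=\pio(g_o)=\Prob(y)$, and reduce the claim to $\E\bigl[\Prob(y\mid v_0)^2\bigr]<\E\bigl[\Prob(y\mid v_1)^2\bigr]$, which is exactly strict conditional Jensen applied to $\Prob(y\mid v_0)=\E[\Prob(y\mid v_1)\mid v_0]$; your bookkeeping step (the joint divergences equal the marginal ones because each density is measurable with respect to one coordinate, so $\E_\pi[g\mid\cdot]=g$) is the case of equality in the data-processing inequality and is sound. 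Your argument buys three things the paper's does not: it is coordinate-free and extends verbatim to nonlinear, non-Gaussian state-space models (any setting where the optimal-proposal weight is the conditional expectation of the standard one gives $\rhoo\le\rhos$, with strictness whenever the transition kernel is non-degenerate and the likelihood non-constant); it yields the strict inequality for \emph{every} $y$ rather than almost every $y$; and it makes transparent \emph{why} the optimal proposal wins --- Rao--Blackwellization of the weight. What it gives up is quantitative information: the paper's computation exhibits the explicit values of $\rhos$ and $\rhoo$, which are what drive the small-noise and high-dimensional comparisons in Subsections \ref{ssec:dasmallnoise} and \ref{ssec:dalargedimension}. One minor caution on your closing sketch of the computational alternative: the eigenvalue ordering of $\Ks\As$ above $\Ko\Ao$ only yields the determinant inequality after also checking that all eigenvalues lie in $[0,1)$ (i.e.\ $\Ks\As\prec I$), as the paper does; but your main Jensen argument does not rely on that sketch.
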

    \begin{proof} 
    By Proposition \ref{prop:inverse_problem_chi2} we have
        \begin{align*}
        \rhos
        &=
        (|I-\Ks\As||I+\Ks \As|)^{-\frac{1}{2}}
        \exp \Bigl( y'\Ks' [(I+ \Ks \As)\Sigmas]^{-1} \Ks \, y \Bigr),\\          
     \rhoo & =   (|I-\Ko\Ao||I+ \Ko\Ao|)^{-\frac{1}{2}}
        \exp\Bigl( y'\Ko' [ (I+ \Ko \Ao) \Sigmas]^{-1} \Ko \, y \Bigr).
        \end{align*}
     Therefore, it suffices to prove the following two inequalities:
        \begin{align}
            |I- \Ks \As||I+ \Ks \As|
            &< |I- \Ko \Ao||I+ \Ko \Ao|,\label{ineq: part filt 1}\\
           \Ks'  [(I+ \Ks \As)\Sigmas]^{-1} \Ks
            &\prec \Ko'  [ (I+ \Ko \Ao) \Sigmas]^{-1}\Ko.\label{ineq: part filt 2}
        \end{align}
    We start with inequality  \eqref{ineq: part filt 2}. Note that 
        \begin{align*}
        (I+ \Ks \As)\Sigmas &= \Sigmas + \Sigmas \As' S^{-1} \As \Sigmas, \\
          (I+ \Ko \Ao)\Sigmao &= \Sigmao + \Sigmao \Ao' S^{-1} \Ao \Sigmao.
        \end{align*}
    Using the definitions in \eqref{eq:standardproposalsdef} and \eqref{eq:optimalproposalsdef} it follows that
        \begin{align*}
         \Ks' \Sigmas^{-1} (I+ \Ks \As)^{-1} \Ks & = H \Bigl\{ (MPM' + Q)^{-1} + H' S H \Bigr\}^{-1} H' \\
        &\prec H \Bigl\{ (MPM')^{-1} + H' S H \Bigr\}^{-1} H' \\
         &= \Ko' \Sigmao^{-1} (I+ \Ko \Ao)^{-1} \Ko.
        \end{align*}    
    For inequality \eqref{ineq: part filt 1}, we notice that
        \begin{align*}
        &\Ks \As = (MPM'+Q)H' S^{-1} H
        =M\tilde{P}M'H' S^{-1} H
        \sim (H' S^{-1}  H)^{\frac{1}{2}}M\tilde{P}M'(H' S^{-1}  H)^{\frac{1}{2}},\\
        &\Ko \Ao = PM'H' S^{-1}  HM
        \sim (H'S^{-1}H)^{\frac{1}{2}}MPM'(H' S^{-1}  H)^{\frac{1}{2}},
        \end{align*}
    where $\tilde{P}:=P+M^{\dagger}QM^{'\dagger}.$ Therefore
   \begin{equation*}
   \Ko \Ao \prec \Ks \As
   \end{equation*}
which, together with $\Ks \As \prec I,$ implies that
        \begin{align*}
        |I- \Ks \As||I+ \Ks \As|
        -|I-\Ko \Ao||I+\Ko \Ao|
        =&|I-(\Ks \As)^2|-|I-(\Ko \Ao)^2|>0,
        \end{align*}
        as desired. 
    \end{proof}

    \subsection{Standard and Optimal Proposal in Small Noise Regime}\label{ssec:dasmallnoise}
    It is possible that along a certain limiting regime, $\rho$  diverges for the standard proposal, but not for the optimal proposal. The proposition below gives an explicit example of this scenario. Precisely, consider the following one-step filtering setting
    \begin{align*}
   v_1 &=Mv_0+\xi, \quad \quad
    v_0\sim\Nc(0,P), \quad
    \xi\sim\Nc(0,Q),\\
    y &=H v_1+\zeta, \quad \quad
    \zeta \sim\Nc(0,r^2R),
    \end{align*}
    where $r\to 0$. Let $\muo^{(r)},\mus^{(r)}$ be the optimal/standard targets and $\pio^{(r)},\pis^{(r)}$ be the  optimal/standard proposals. We assume that $M \in \R^{d \times d}$ and $H \in \R^{k \times d}$ are full rank. 
    \begin{proposition}
    If $r\to 0$, then we have
        \begin{align*}
        &\rhoo^{(r)}<\infty,\\
        &\rhos^{(r)}\sim\bigO(r^{-k}).
        \end{align*}
    \end{proposition}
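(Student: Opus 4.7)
The plan is to invoke Proposition~\ref{prop:inverse_problem_chi2}, and its consequence Proposition~\ref{prop:noise_scaling}, separately for the two proposals, exploiting the fact that the scaling parameter $r^2$ enters the standard and optimal inverse problems in fundamentally different ways.

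For the standard proposal, the inverse problem determined by \eqref{eq:standardproposalsdef} has design matrix $\As = H$, prior covariance $\Sigmas = MPM' + Q$, and noise covariance $r^2 R$. Since $H$ is full rank and $\Sigmas$ is symmetric positive definite (as the sum of an SPSD and an SPD matrix), the setting matches exactly the hypotheses of Proposition~\ref{prop:noise_scaling} with $\gamma = r$. The conclusion $\rhos^{(r)} \sim \bigO(r^{-k})$ is immediate.

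For the optimal proposal, the inverse problem determined by \eqref{eq:optimalproposalsdef} has design matrix $\Ao = HM$, prior covariance $\Sigmao = P$, and noise covariance $\Gammao^{(r)} = HQH' + r^2 R$. The crucial observation is that $\Gammao^{(r)}$ does \emph{not} degenerate as $r \to 0$: because $Q \succ 0$ and $H \in \R^{k \times d}$ has full row rank $k$, the matrix $HQH'$ is symmetric positive definite on $\R^k$, so $\Gammao^{(r)} \to HQH' \succ 0$. Consequently the Kalman gain
\[
\Ko^{(r)} = P(HM)'\bigl(HMPM'H' + HQH' + r^2 R\bigr)^{-1}
\]
converges to a finite matrix, and $I \pm \Ko^{(r)}HM$ converge to finite and invertible limits (invertibility of $I - \Ko^{(r)}HM$ follows since it equals $C_{\mathrm{opt}}^{(r)} \Sigmao^{-1}$, where the limiting optimal posterior covariance is still SPD). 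The quadratic exponent $y' \Ko^{(r)'}\bigl[(I + \Ko^{(r)}HM)\Sigmao\bigr]^{-1}\Ko^{(r)} y$ likewise converges to a finite value. Plugging into the formula of Proposition~\ref{prop:inverse_problem_chi2} yields $\limsup_{r\to 0}\rhoo^{(r)} < \infty$.

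The only substantive point is the verification that $HQH' \succ 0$, which uses that $H$ has rank $k$ (so $H'$ is injective on $\R^k$) combined with $Q \succ 0$; once this is in place, both parts reduce to tracking finite, nondegenerate limits of the matrices appearing in Proposition~\ref{prop:inverse_problem_chi2}, with no further difficulty.
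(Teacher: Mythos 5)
Your proof is correct and follows essentially the same route as the paper: apply Proposition \ref{prop:noise_scaling} to the standard proposal, whose noise covariance $r^2R$ degenerates, and observe that the optimal proposal's noise covariance $HQH'+r^2R$ converges to the nondegenerate limit $HQH'\succ 0$, so all quantities in Proposition \ref{prop:inverse_problem_chi2} remain bounded. If anything you are slightly more careful than the paper, which asserts the two inverse problems share ``identical prior and design matrix'' (they do not, but this is immaterial) and does not spell out why $HQH'\succ 0$.
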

    \begin{proof}
    Consider the two inverse problems that correspond to $\muo^{(r)},\pio^{(r)}$ and $\mus^{(r)},\pis^{(r)}$. Note that the two problems have identical prior and design matrix. Let $\Gammao^{(r)}$ and $\Gammas^{(r)}$ denote the noise in those two inverse problems. When $r$ goes to $0$, we observe that
        \begin{align*}
            &\Gammao^{(r)}= r^2R +HQH'\to HQH',\\
            &\Gammas^{(r)}= r^2R \to 0.
        \end{align*}
    Therefore, the limit of $\rhoo^{(r)}$ converges to a finite value, but Lemma \ref{prop:noise_scaling} implies that $\rhos^{(r)}$ diverges at rate $\bigO(r^{-k})$.
    \end{proof}

    \subsection{Standard and Optimal Proposal in High Dimension}\label{ssec:dalargedimension}
    The previous subsection shows that the standard and optimal proposals can have dramatically different behavior in the small noise regime $r \to 0.$ Here we show that both proposals can also lead to dramatically different behavior in high dimensional limits. Precisely, as a consequence of Corollary \ref{cor: dimensionality 3 equivalence} we can easily identify the exact regimes where both proposals converge or diverge in limit. The notation is analogous to that in Subsection \ref{ssec:dalargedimension}, and so we omit the details. 
    
    \begin{proposition}
    Consider the sequence of particle filters defined as above. We have the following convergence criteria:
        \begin{enumerate}
            \item $\muo^{(1:\infty)}\ll\pio^{(1:\infty)}$ and $\rhoo< \infty$ if and only if $\sumi\frac{h_i^2m_i^2p_i^2}{h_i^2q_i^2+r_i^2}<\infty$,
            \item $\mus^{(1:\infty)}\ll\pis^{(1:\infty)}$ and $\rhos <\infty$ if and only if $\sumi\frac{h_i^2m_i^2p_i^2}{r_i^2}<\infty$ and 
            $\sumi\frac{h_i^2q_i^2}{r_i^2}<\infty$.
        \end{enumerate}
    \end{proposition}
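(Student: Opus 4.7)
The plan is to reduce this proposition to a direct application of Corollary \ref{cor: dimensionality 3 equivalence} by identifying, for each coordinate $i$, the one-dimensional signal-to-noise ratio $\lambda_i$ induced by the two proposals. Because all the matrices $M, H, P, Q, R$ are taken to be diagonal (this is the analog of the setup in Subsection \ref{ssec:ipdimension}, inherited from the discussion referenced by ``notation is analogous''), the full one-step filtering problem splits into an infinite product of independent one-dimensional one-step filters indexed by $i$, so the prior/posterior factor as in \eqref{eq:product structure} under both proposal choices.

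First I would read off the effective inverse-problem parameters for each coordinate. Using the standard-proposal definitions in \eqref{eq:standardproposalsdef}, the $i$-th coordinate gives design coefficient $a_i^{\mbox{\tiny std}} = h_i$, noise variance $(\gamma_i^{\mbox{\tiny std}})^2 = r_i^2$, and prior variance $(\sigma_i^{\mbox{\tiny std}})^2 = m_i^2 p_i^2 + q_i^2$, so that
\[
\lambda_i^{\mbox{\tiny std}} \;=\; \frac{(a_i^{\mbox{\tiny std}})^2 (\sigma_i^{\mbox{\tiny std}})^2}{(\gamma_i^{\mbox{\tiny std}})^2} \;=\; \frac{h_i^2 m_i^2 p_i^2}{r_i^2} + \frac{h_i^2 q_i^2}{r_i^2}.
\]
Using the optimal-proposal definitions in \eqref{eq:optimalproposalsdef}, the $i$-th coordinate yields $a_i^{\mbox{\tiny opt}} = h_i m_i$, $(\gamma_i^{\mbox{\tiny opt}})^2 = h_i^2 q_i^2 + r_i^2$, and $(\sigma_i^{\mbox{\tiny opt}})^2 = p_i^2$, so that
\[
\lambda_i^{\mbox{\tiny opt}} \;=\; \frac{h_i^2 m_i^2 p_i^2}{h_i^2 q_i^2 + r_i^2}.
\]

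Next I would invoke Corollary \ref{cor: dimensionality 3 equivalence}, which states that absolute continuity of the infinite-dimensional posterior with respect to the infinite-dimensional prior and finiteness of $\rho_\infty$ (for almost every $y$) are both equivalent to $\sum_{i=1}^\infty \lambda_i < \infty$. Substituting $\lambda_i^{\mbox{\tiny opt}}$ gives claim (1) immediately. For claim (2), substituting $\lambda_i^{\mbox{\tiny std}}$ gives that convergence is equivalent to $\sum_{i=1}^\infty \bigl( \frac{h_i^2 m_i^2 p_i^2}{r_i^2} + \frac{h_i^2 q_i^2}{r_i^2} \bigr) < \infty$, and since both summands are non-negative this is equivalent to each of the two stated sums being finite separately.

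There is no genuine obstacle here beyond bookkeeping: the content sits entirely in Corollary \ref{cor: dimensionality 3 equivalence} and Proposition \ref{thm: chi2}. The only point worth checking carefully is that the diagonal product structure of $M, H, P, Q, R$ actually carries through to both \eqref{eq:standardproposalsdef} and \eqref{eq:optimalproposalsdef}, so that the two proposals decouple across coordinates; this is immediate because products, transposes and inverses of diagonal matrices remain diagonal, so each coordinate indeed reduces to a one-dimensional filter of the form \eqref{eq:1dproblem} to which Lemma \ref{lem: oned} and Corollary \ref{cor: dimensionality 3 equivalence} apply verbatim.
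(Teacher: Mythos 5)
Your proposal is correct and follows essentially the same route as the paper: the paper's proof likewise reads off $\lambda^{(i)}_{\mbox{\tiny std}} = \frac{h_i^2m_i^2p_i^2+h_i^2q_i^2}{r_i^2}$ and $\lambda^{(i)}_{\mbox{\tiny opt}} = \frac{h_i^2m_i^2p_i^2}{h_i^2q_i^2+r_i^2}$ coordinate-wise and then applies Corollary \ref{cor: dimensionality 3 equivalence}. Your write-up merely makes explicit the bookkeeping (the diagonal decoupling and the splitting of the standard-proposal sum into two non-negative pieces) that the paper leaves as ``direct computation.''
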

    \begin{proof}
    By direct computation, we have
        \begin{align*}
        &\lambdas^{(i)} = \frac{h_i^2m_i^2p_i^2+h_i^2q_i^2}{r_i^2}=\frac{h_i^2m_i^2p_i^2}{r_i^2}+\frac{h_i^2q_i^2}{r_i^2},\\
        &\lambdao^{(i)} = \frac{h_i^2m_i^2p_i^2}{h_i^2q_i^2+r_i^2}.
        \end{align*}
   Theorem \ref{cor: dimensionality 3 equivalence} gives the desired result.
    \end{proof}
    
    \begin{example}
  As a simple example where absolute continuity holds for the optimal proposal but not for the standard one, let $h_i = m_i = p_i = r_i =1.$ Then $\rhos = \infty,$ but $\rhoo < \infty$ provided that $\sum_{i=1}^\infty \frac{1}{q_i^2 +1} < \infty. $  \qed
    \end{example}

\section*{Acknowledgement}
The work of DSA was supported by NSF and NGA through the grant DMS-2027056. DSA also acknowledges partial support from the NSF Grant DMS-1912818/1912802.
\bibliographystyle{plain} 
\bibliography{reference}

\begin{thebibliography}{10}

\bibitem{agapiou2017importance}
S.~Agapiou, O.~Papaspiliopoulos, D.~Sanz-Alonso, and A.~M. Stuart.
\newblock Importance sampling: Intrinsic dimension and computational cost.
\newblock {\em Statistical Science}, 32(3):405--431, 2017.

\bibitem{akyildiz2019convergence}
{\"O}.~Deniz Akyildiz and J.~M{\'\i}guez.
\newblock Convergence rates for optimised adaptive importance samplers.
\newblock {\em arXiv preprint arXiv:1903.12044}, 2019.

\bibitem{barber2012bayesian}
D.~Barber.
\newblock {\em Bayesian Reasoning and Machine Learning}.
\newblock Cambridge University Press, 2012.

\bibitem{BBL08}
T.~Bengtsson, P.~Bickel, B.~Li, et~al.
\newblock Curse-of-dimensionality revisited: Collapse of the particle filter in
  very large scale systems.
\newblock In {\em Probability and statistics: Essays in honor of David A.
  Freedman}, pages 316--334. Institute of Mathematical Statistics, 2008.

\bibitem{BLB08}
P.~Bickel, B.~Li, T.~Bengtsson, et~al.
\newblock Sharp failure rates for the bootstrap particle filter in high
  dimensions.
\newblock In {\em Pushing the limits of contemporary statistics: Contributions
  in honor of Jayanta K. Ghosh}, pages 318--329. Institute of Mathematical
  Statistics, 2008.

\bibitem{bogachev1998gaussian}
V.~I. Bogachev.
\newblock {\em {Gaussian Measures}}.
\newblock Number~62. American Mathematical Soc., 1998.

\bibitem{bugallo2017adaptive}
M.~F. Bugallo, V.~Elvira, L.~Martino, D.~Luengo, J.~Miguez, and P.~M. Djuric.
\newblock Adaptive importance sampling: The past, the present, and the future.
\newblock {\em IEEE Signal Processing Magazine}, 34(4):60--79, 2017.

\bibitem{CP15}
S.~Chatterjee and P.~Diaconis.
\newblock The sample size required in importance sampling.
\newblock {\em arXiv preprint arXiv:1511.01437}, 2015.

\bibitem{chorin2013conditions}
A.~J. Chorin and M.~Morzfeld.
\newblock Conditions for successful data assimilation.
\newblock {\em Journal of Geophysical Research: Atmospheres}, 118(20):11--522,
  2013.

\bibitem{del2004feynman}
P.~Del~Moral.
\newblock {\em Feynman-Kac Formulae}.
\newblock Springer, 2004.

\bibitem{dick2013high}
Josef Dick, Frances~Y Kuo, and Ian~H Sloan.
\newblock High-dimensional integration: the quasi-monte carlo way.
\newblock {\em Acta Numerica}, 22:133, 2013.

\bibitem{doucet2001introduction}
A.~Doucet, N.~De~Freitas, and N.~Gordon.
\newblock {An Introduction to Sequential Monte Carlo Methods}.
\newblock In {\em {Sequential Monte Carlo Methods in Practice}}, pages 3--14.
  Springer, 2001.

\bibitem{trillos2020consistency}
N.~Garcia~Trillos, Z.~Kaplan, T.~Samakhoana, and D.~Sanz-Alonso.
\newblock On the consistency of graph-based bayesian semi-supervised learning
  and the scalability of sampling algorithms.
\newblock {\em Journal of Machine Learning Research}, 21(28):1--47, 2020.

\bibitem{trillos2018bayesian}
N.~Garcia~Trillos and D.~Sanz-Alonso.
\newblock {The Bayesian update: variational formulations and gradient flows}.
\newblock {\em Bayesian Analysis}, 2018.

\bibitem{kahn1955use}
H.~Kahn.
\newblock {\em {Use of different Monte Carlo sampling techniques}}.
\newblock Rand Corporation, 1955.

\bibitem{kahn1953methods}
H.~Kahn and A.~W. Marshall.
\newblock {Methods of reducing sample size in Monte Carlo computations}.
\newblock {\em Journal of the Operations Research Society of America},
  1(5):263--278, 1953.

\bibitem{kong1992note}
A.~Kong.
\newblock A note on importance sampling using standardized weights.
\newblock {\em University of Chicago, Dept. of Statistics, Tech. Rep}, 348,
  1992.

\bibitem{kong1994sequential}
A.~Kong, J.~S. Liu, and W.~H. Wong.
\newblock {Sequential imputations and Bayesian missing data problems}.
\newblock {\em Journal of the American Statistical Association},
  89(425):278--288, 1994.

\bibitem{liu1996metropolized}
J.~S. Liu.
\newblock Metropolized independent sampling with comparisons to rejection
  sampling and importance sampling.
\newblock {\em Statistics and Computing}, 6(2):113--119, 1996.

\bibitem{nielsen2013chi}
F.~Nielsen and R.~Nock.
\newblock On the chi square and higher-order chi distances for approximating
  f-divergences.
\newblock {\em IEEE Signal Processing Letters}, 21(1):10--13, 2013.

\bibitem{pitt1999filtering}
M.~K. Pitt and N.~Shephard.
\newblock Filtering via simulation: Auxiliary particle filters.
\newblock {\em Journal of the American Statistical Association},
  94(446):590--599, 1999.

\bibitem{RH13}
P.~Rebeschini and R.~Van~Handel.
\newblock Can local particle filters beat the curse of dimensionality?
\newblock {\em The Annals of Applied Probability}, 25(5):2809--2866, 2015.

\bibitem{rubino2009rare}
G.~Rubino and B.~Tuffin.
\newblock {\em {Rare Event Simulation Using Monte Carlo Methods}}.
\newblock John Wiley \& Sons, 2009.

\bibitem{ryu2014adaptive}
E.~K. Ryu and S.~P. Boyd.
\newblock Adaptive importance sampling via stochastic convex programming.
\newblock {\em arXiv preprint arXiv:1412.4845}, 2014.

\bibitem{sanz2018importance}
D.~Sanz-Alonso.
\newblock Importance sampling and necessary sample size: an information theory
  approach.
\newblock {\em SIAM/ASA Journal on Uncertainty Quantification}, 6(2):867--879,
  2018.

\bibitem{sanzstuarttaeb}
D.~Sanz-Alonso, A.~M. Stuart, and A.~Taeb.
\newblock Inverse problems and data assimilation.
\newblock {\em arXiv preprint arXiv:1810.06191}, 2018.

\bibitem{snyder2008obstacles}
C.~Snyder, T.~Bengtsson, P.~Bickel, and J.~Anderson.
\newblock Obstacles to high-dimensional particle filtering.
\newblock {\em Monthly Weather Review}, 136(12):4629--4640, 2008.

\bibitem{snyder2015performance}
C.~Snyder, T.~Bengtsson, and M.~Morzfeld.
\newblock Performance bounds for particle filters using the optimal proposal.
\newblock {\em Monthly Weather Review}, 143(11):4750--4761, 2015.

\end{thebibliography}

\appendix

\section{\texorpdfstring{$\chi^2$}{chi2}-divergence between Gaussians}\label{appendixA}

    We recall that the distribution $P_\theta$ parameterized by $\theta$  belongs to the exponential family $\mathcal{E}_F(\Theta)$ over a natural parameter space $\Theta$, if $\theta\in\Theta$ and $P_\theta$ has density of the form
        \[f(u;\theta)=e^{\langle t(u),\theta\rangle-F(\theta)+k(u)},\]
    where the natural parameter space is given by
        \[\Theta = \left\{\theta:\int e^{\langle t(u),\theta\rangle+k(u)}du<\infty\right\}.\]

  The following result can be found in \cite{nielsen2013chi}.    
    
	\begin{lemma}\label{lemma:chi2_expfamily}
		Suppose $\theta_{1,2}\in\Theta$ are parameters for probability densities $f(u;\theta_{1,2})=e^{\langle t(u),\theta_{1,2}\rangle-F(\theta_{1,2})+k(u)}$ with $2\theta_1-\theta_2\in\Theta.$ Then,
		\[\dchi \bigl(f(\cdotp;\theta_1)\|f(\cdotp;\theta_2) \bigr)=e^{F(2\theta_1-\theta_2)-2F(\theta_1)+F(\theta_2)}-1.\]
	\end{lemma}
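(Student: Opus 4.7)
The plan is a direct computation from the definition of $\chi^2$-divergence, exploiting the fact that exponential families are closed under pointwise products and ratios up to a normalizing constant.

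First, I would write the $\chi^2$-divergence as
\[
\dchi\bigl(f(\cdot;\theta_1)\|f(\cdot;\theta_2)\bigr) = \int \frac{f(u;\theta_1)^2}{f(u;\theta_2)}\, du - 1.
\]
Plugging in the exponential-family form and collecting terms, the integrand becomes
\[
\frac{f(u;\theta_1)^2}{f(u;\theta_2)} = \exp\Bigl(\langle t(u),\, 2\theta_1-\theta_2\rangle - 2F(\theta_1) + F(\theta_2) + k(u)\Bigr),
\]
where the key simplification is that the two copies of $k(u)$ in the numerator and the single copy in the denominator collapse to a single $k(u)$, and the inner products combine linearly into $\langle t(u), 2\theta_1 - \theta_2\rangle$.

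Next, I would use the assumption $2\theta_1-\theta_2\in\Theta$. By the definition of the natural parameter space, the density $f(u; 2\theta_1-\theta_2) = \exp(\langle t(u), 2\theta_1-\theta_2\rangle - F(2\theta_1-\theta_2) + k(u))$ integrates to one, so
\[
\int \exp\Bigl(\langle t(u), 2\theta_1-\theta_2\rangle + k(u)\Bigr)\, du = \exp\bigl(F(2\theta_1-\theta_2)\bigr).
\]
Pulling the $\theta$-dependent constants $-2F(\theta_1) + F(\theta_2)$ outside the integral and combining with the above identity gives
\[
\int \frac{f(u;\theta_1)^2}{f(u;\theta_2)}\, du = \exp\bigl(F(2\theta_1-\theta_2) - 2F(\theta_1) + F(\theta_2)\bigr),
\]
and subtracting $1$ yields the claim.

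There is really no substantive obstacle here; the only thing to be careful about is verifying that $2\theta_1-\theta_2 \in \Theta$ is precisely what is needed to guarantee that the integral converges and equals $e^{F(2\theta_1-\theta_2)}$. Without this hypothesis the $\chi^2$-divergence could be infinite, which matches the phenomenon in Proposition~\ref{prop:gaussian_chi2} where the condition $2\Sigma \succ C$ plays the analogous role.
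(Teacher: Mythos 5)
Your proof is correct and follows essentially the same route as the paper: expand $\int f(u;\theta_1)^2 f(u;\theta_2)^{-1}\,du$ in exponential-family form, recognize the integral as $e^{F(2\theta_1-\theta_2)}$ times the normalized density at parameter $2\theta_1-\theta_2$, and use the hypothesis $2\theta_1-\theta_2\in\Theta$ to conclude it integrates to one. Nothing is missing.
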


	\begin{proof}
	By direct computation,
		\begin{align*}
		\dchi \bigl(f(\cdotp;\theta_1)\|f(\cdotp;\theta_2)\bigr)+1
		=&\int f(u;\theta_1)^2f(u;\theta_2)^{-1}du\\
		=&\int e^{\langle t(u),2\theta_1-\theta_2\rangle-(2F(\theta_1)-F(\theta_2))+k(u)} \, du\\
		=&e^{F(2\theta_1-\theta_2)-2F(\theta_1)+F(\theta_2)}\int f(u;2\theta_1-\theta_2) \, du\\
		=&e^{F(2\theta_1-\theta_2)-2F(\theta_1)+F(\theta_2)}.
		\end{align*}
    Note that $\int f(u;2\theta_1-\theta_2) \,du = 1$ since $2\theta_1-\theta_2 \in \Theta$ by assumption. 
	\end{proof}
    Using Lemma \ref{lemma:chi2_expfamily} we can compute the $\chi^2$-divergence between Gaussians. To do so, we note that $d-$dimensional Gaussians $\mathcal{N}(\mu,\Sigma)$ belong to the exponential family over the parameter space $\mathbb{R}^d\bigoplus\mathbb{R}^{d\times d}$ by letting $\theta = [\Sigma^{-1}\mu;-\frac{1}{2}\Sigma^{-1}]$ and  $F(\theta)=\frac{1}{2}\mu'\Sigma^{-1}\mu+\frac{1}{2}\log|\Sigma|$. In the context of Gaussians, an exponential parameter $\theta = [\Sigma^{-1}\mu;-\frac{1}{2}\Sigma^{-1}]$ belongs to the natural parameter space $\Theta$ if and only if $\Sigma$ is symmetric and positive  definite. Indeed, the integral $\int \exp(-\frac{1}{2}(u-\mu)'\Sigma^{-1}(u-\mu))du$ is finite if and only if $\Sigma \succ 0$.

    \begin{proof}[Proof of Proposition \ref{prop:gaussian_chi2}] 
    Let $\theta_\mu,\theta_\pi$ be the exponential parameters of $\mu,\pi$. Then $2\theta_\mu-\theta_\pi$ corresponds to a Gaussian with mean $(2C^{-1}-\Sigma^{-1})^{-1}(2C^{-1}m)$ and covariance $(2C^{-1}-\Sigma^{-1})^{-1}.$ We have
        \begin{align*}
        F(2\theta_\mu-\theta_\pi)-2F(\theta_\mu)+F(\theta_\pi)
        &=\frac{1}{2}\log|(2C^{-1}-\Sigma^{-1})^{-1}|-\log|C|+\frac{1}{2}\log|\Sigma| + \\
         & \quad \quad \quad \quad\frac{1}{2}(2C^{-1}m)'(2C^{-1}-\Sigma^{-1})^{-1}(2C^{-1}m)-m'C^{-1}m\\
        &=\log\sqrt{\frac{|\Sigma|}{|2C^{-1}-\Sigma^{-1}||C|^2}}
        + m'(C^{-1}(2C^{-1}-\Sigma^{-1})^{-1}2C^{-1})m\\
            &\quad \quad \quad \quad-m'(C^{-1}(2C^{-1}-\Sigma^{-1})^{-1}(2C^{-1}-\Sigma^{-1})) m\\
        &=\log\frac{|\Sigma|}{\sqrt{|2\Sigma-C||C|}}
        +m'(C^{-1}(2C^{-1}-\Sigma^{-1})^{-1}\Sigma^{-1})m\\
        &=\log\frac{|\Sigma|}{\sqrt{|2\Sigma-C||C|}}
        +m'(2\Sigma-C)^{-1}m.
        \end{align*}
    Applying Lemma \ref{lemma:chi2_expfamily} gives 
        \begin{align*}
        \dchi(\mu\|\pi)
        &=\exp\Bigl( F(2\theta_\mu-\theta_\pi)-2F(\theta_\mu)+F(\theta_\pi) \Bigr)-1\\
        &=\frac{|\Sigma|}{\sqrt{|2\Sigma-C||C|}} \exp \Bigl(m'(2\Sigma-C)^{-1}m \Bigr) -1,
        \end{align*}
    if $2\theta_\mu-\theta_\pi\in\Theta$. In other words, the corresponding covariance matrix $(2C^{-1}-\Sigma^{-1})^{-1}$ is positive  definite. 
    \end{proof}
    
    \begin{remark}
    By translation invariance of Lebesgue measure, we can obtain the more general formula for $\chi^2$-divergence between two Gaussians with non-zero mean by replacing $m$ with the difference between the two mean vectors:
        \[\dchi \Bigl(\Nc(m_1,C)\| \, \Nc(m_2,\Sigma)\Bigr)
        = \frac{|\Sigma|}{\sqrt{|2\Sigma-C||C|}}e^{(m_1-m_2)' (2 \Sigma - C)^{-1}(m_1-m_2)}  -1 .\]
    \end{remark}

\section{Proof of Lemma \ref{lem: hellinger integral abs cts}}\label{aappendixB}
\begin{proof}
    Dividing $g$ by its normalizing constant, we may assume without loss of generality that $g$ is exactly the Radon-Nikodym derivative $\frac{d\mu}{d\pi}$ and $\mathcal{H}(\mu,\pi)=\pi_i(\sqrt{g})$.\\
	If $\mu_{1:\infty}\ll\pi_{1:\infty}$, then the Radon-Nikodym derivative $g_{1:\infty}$ cannot be $\pi_{1:\infty}$ a.e. zero since $\pi_{1:\infty}$ and $\mu_{1:\infty}$ are probability measures. As a consequence, $\prodi\pi_i\left(\sqrt{g_i}\right)=\pi_{1:\infty}\left(\sqrt{g_{1:\infty}}\right)>0$ by the product structure of $\mu_{1:\infty}$ and $\pi_{1:\infty}$.\\
	Now we assume $\prodi\pi_i\left(\sqrt{g_i}\right)>0$. It suffices to show that $g_{1:\infty}$ is well-defined, i.e. convergence of $\prod_{i=1}^Lg_i$ in $L_{\pi}^1$ as $L\to\infty$. It suffices to prove that the sequence is Cauchy, in other words
		\begin{align*}
			\lim_{L,\ell\to\infty}\pi_{1:\infty}\left(|g_{1:L+\ell}-g_{1:L}|\right)=0.
		\end{align*}
	We observe that
		\begin{align*}
			\|g_{1:L+\ell}-g_{1:L}\|_1
			&\leq \|\sqrt{g_{1:L+\ell}}-\sqrt{g_{1:L}}\|_2\|\sqrt{g_{1:L+\ell}}+\sqrt{g_{1:L}}\|_2\\
			&\leq \|\sqrt{g_{1:L+\ell}}-\sqrt{g_{1:L}}\|_2(\|\sqrt{g_{1:L+\ell}}\|_2+\|\sqrt{g_{1:L}}\|_2)\\
			&= 2\|\sqrt{g_{1:L+\ell}}-\sqrt{g_{1:L}}\|_2.
		\end{align*}
    Expanding the square of the right-hand side gives 
		\begin{align*}
			\pi_{1:\infty}\left(\left|\sqrt{g_{1:L+\ell}}-\sqrt{g_{1:L}}\right|^2\right)
			&=\pi_{1:\infty}\left(g_{1:L+\ell}+g_{1:L}-2\sqrt{g_{1:L+\ell}g_{1:L}}\right)\\
			&=2-2\pi_{1:L}\left(g_{1:L}\right)\pi_{L+1:\infty}\left(\sqrt{\frac{g_{1:L+\ell}}{g_{1:L}}}\right)\\
			&=2\left(1-\frac{\pi_{1:L+\ell}\left(\sqrt{g_{1:L+\ell}}\right)}{\pi_{1:L}\left(\sqrt{g_{1:L}}\right)}\right).
		\end{align*}
	Therefore, it is enough to show
		\begin{align*}
			\lim_{L,\ell\to\infty}\frac{\pi_{1:L+\ell}\left(\sqrt{g_{1:L+\ell}}\right)}{\pi_{1:L}\left(\sqrt{g_{1:L}}\right)}=1.
		\end{align*}
	By Jensen's inequality, for any two probability measures $\mu\ll\pi$ with density $g$, we have
		\begin{align}\label{ineq: hellinger bounded}
		\pi\left(\sqrt{g}\right)\leq \sqrt{\pi\left({g}\right)} = 1.
		\end{align}
	Combining with our assumption, we deduce that
	\[0<\prodi\pi_i\left(\sqrt{g_i}\right)=\pi_{1:\infty}\left(\sqrt{g_{1:\infty}}\right)\leq 1,\]
	which is equivalent to
	\[-\infty< \sumi\log(\pi_i\left(\sqrt{g_i}\right))\leq 0.\]
	This series is monotonely decreasing by \eqref{ineq: hellinger bounded} and bounded below, so it converges and  satisfies that
		\begin{align*}
			\lim_{L,\ell\to\infty}\frac{\pi_{1:L+\ell}\left(\sqrt{g_{1:L+\ell}}\right)}{\pi_{1:L}\left(\sqrt{g_{1:L}}\right)}
			=\lim_{L,\ell\to\infty}e^{\sum_{i=L}^{L+\ell}{\log(\pi_i\left(\sqrt{g_i}\right))}}=1.
		\end{align*}
	\end{proof}

\newpage

\section{Additional Figures}\label{AppendixC}

\begin{figure}[!htb] 
    \begin{minipage}{.5\textwidth}
       
        \includegraphics[width=1\linewidth, height=0.21\textheight]{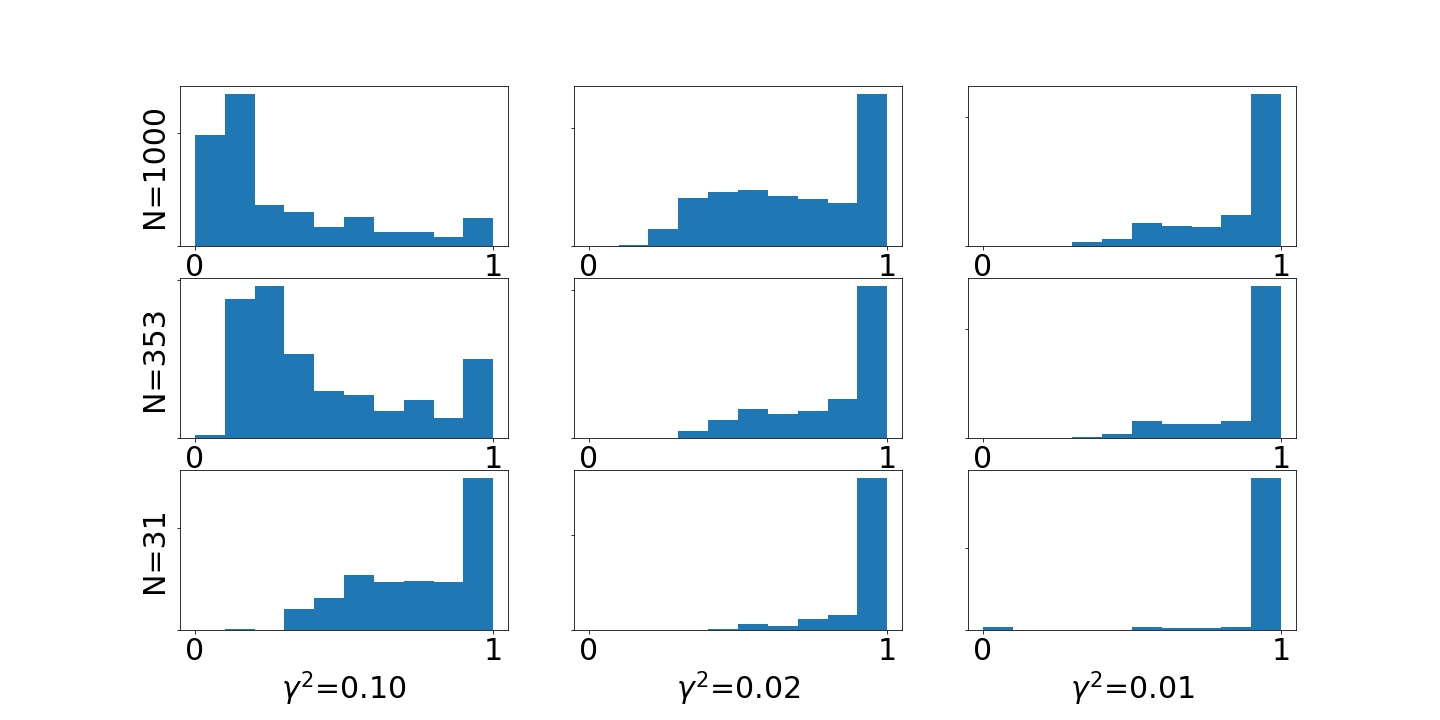}
        \subcaption{$N=\gamma^{-3}$.}
        \label{fig:prob1_6_2}
    \end{minipage}%
    \begin{minipage}{0.5\textwidth}
       
        \includegraphics[width=1\linewidth, height=0.21\textheight]{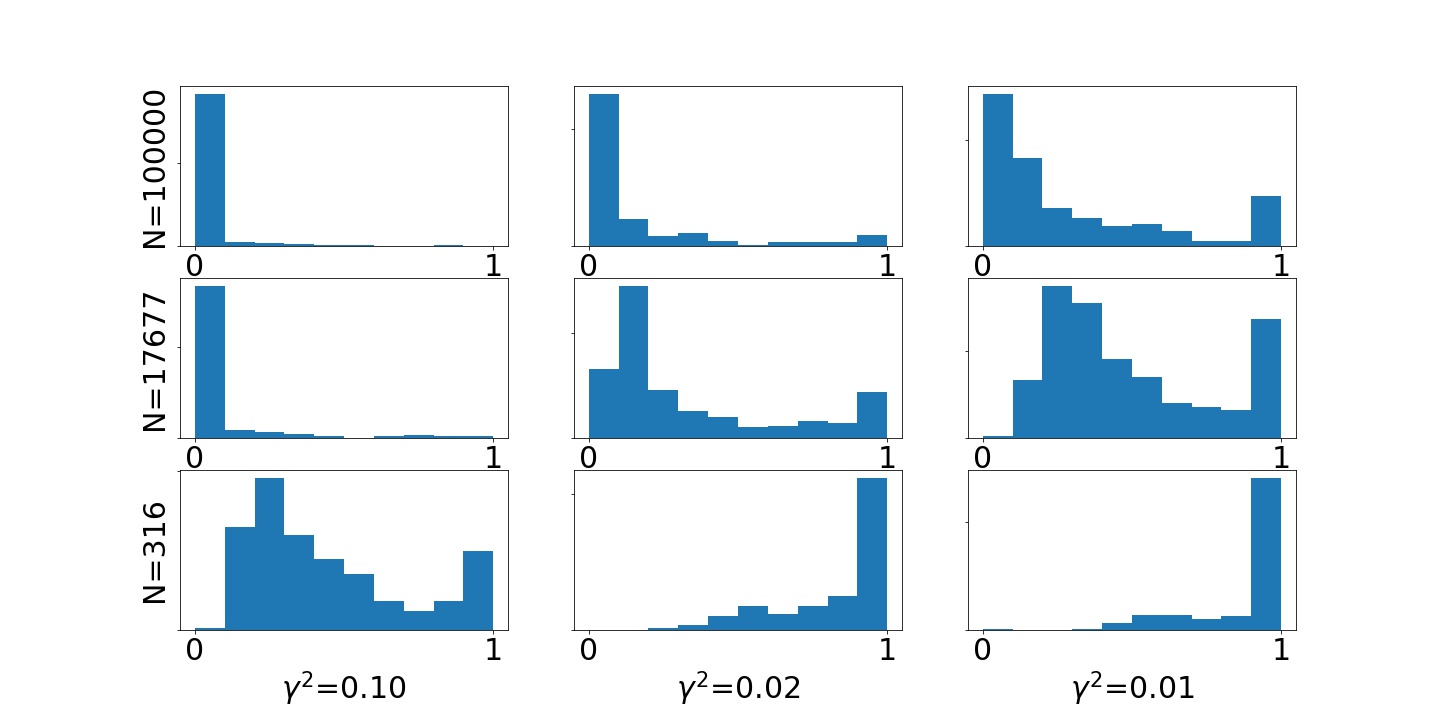}
        \subcaption{$N = \gamma^{-5}$.}
        \label{fig:prob1_6_1}
    \end{minipage}
    \caption{Noise scaling with $d = k =  4.$ \label{fig:Noise Scaling d=4}}
\end{figure}

\begin{figure}[!htb] 
    \begin{minipage}{.5\textwidth}
       
        \includegraphics[width=1\linewidth, height=0.21\textheight]{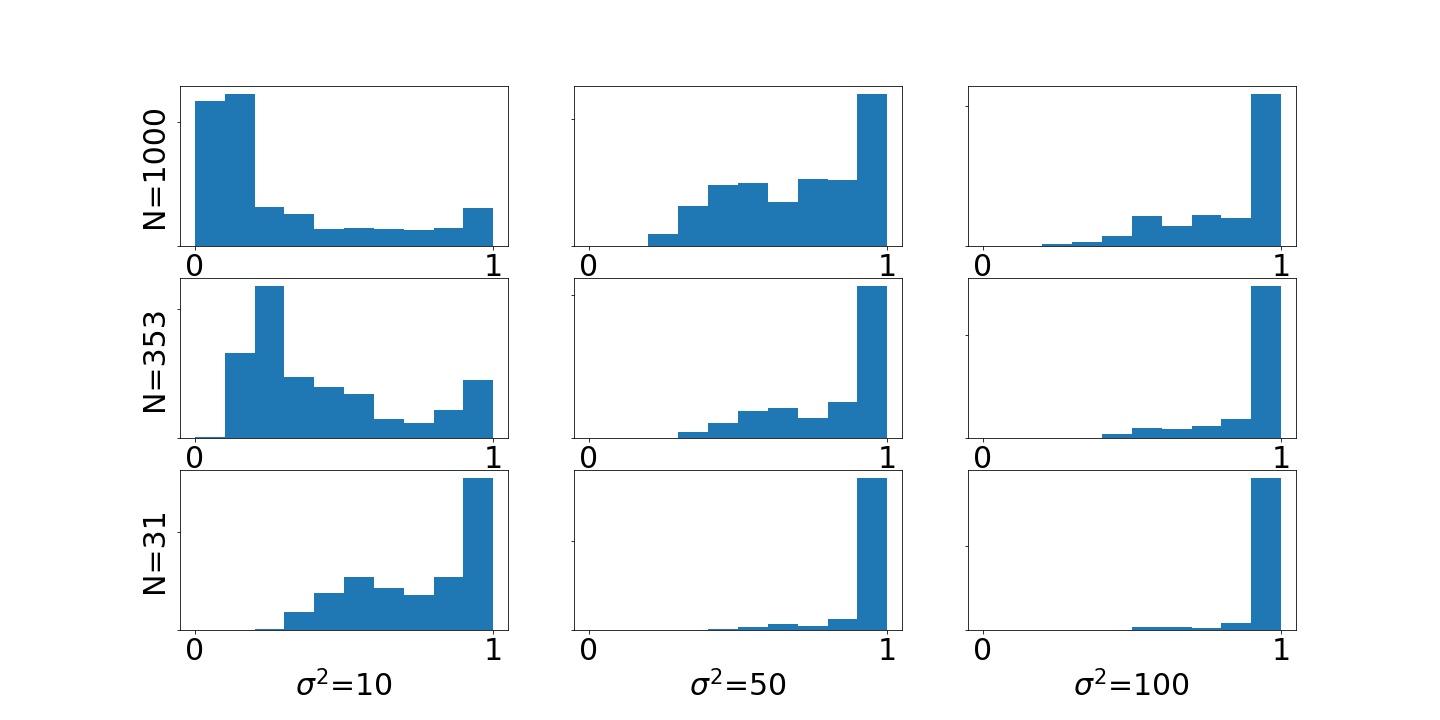}
        \subcaption{$N=\sigma^{-3}$.}
        \label{fig:prob1_6_2}
    \end{minipage}%
    \begin{minipage}{0.5\textwidth}
       
        \includegraphics[width=1\linewidth, height=0.21\textheight]{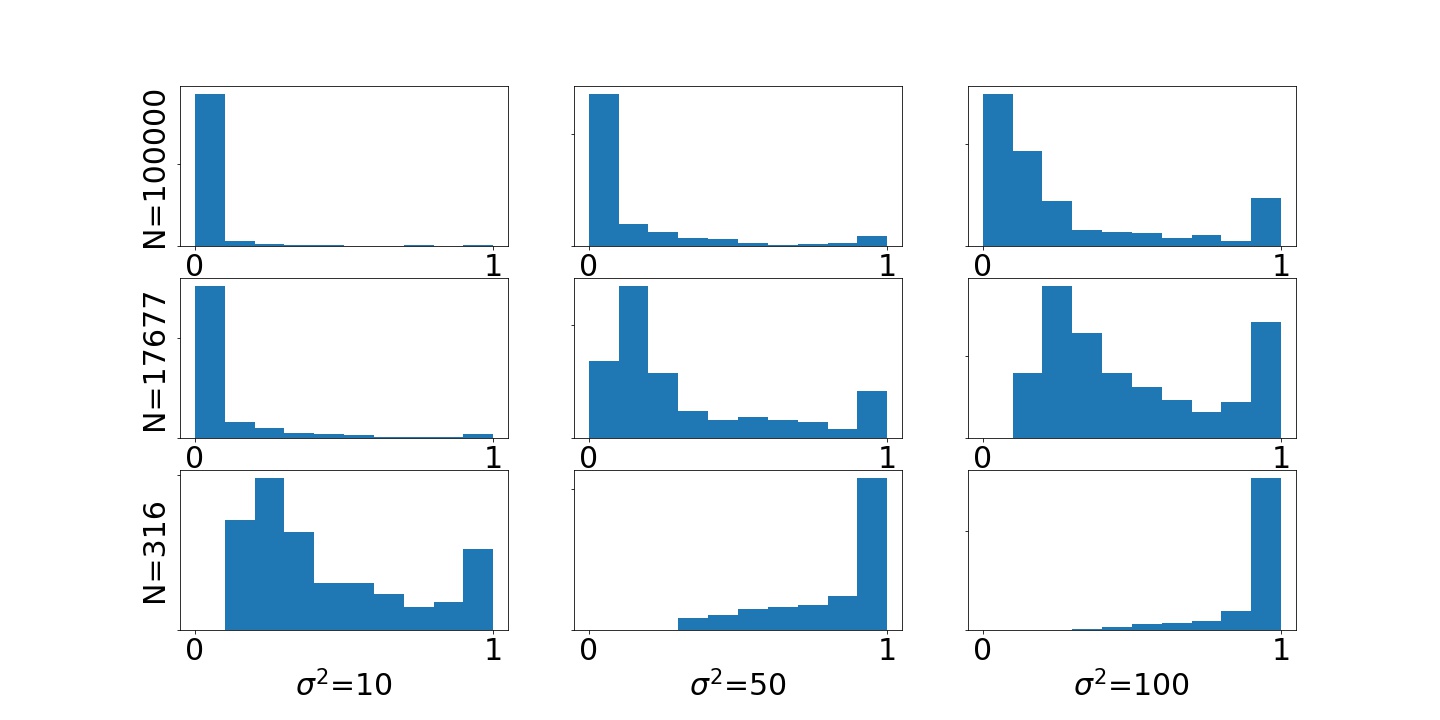}
        \subcaption{$N = \sigma^{-5}$.}
        \label{fig:prob1_6_1}
    \end{minipage}
    \caption{Prior scaling with $d = k =  4.$ \label{fig:Prior Scaling d=4}}
\end{figure}

\begin{figure}[!htb] 
    \begin{minipage}{.5\textwidth}
       
        \includegraphics[width=1\linewidth, height=0.21\textheight]{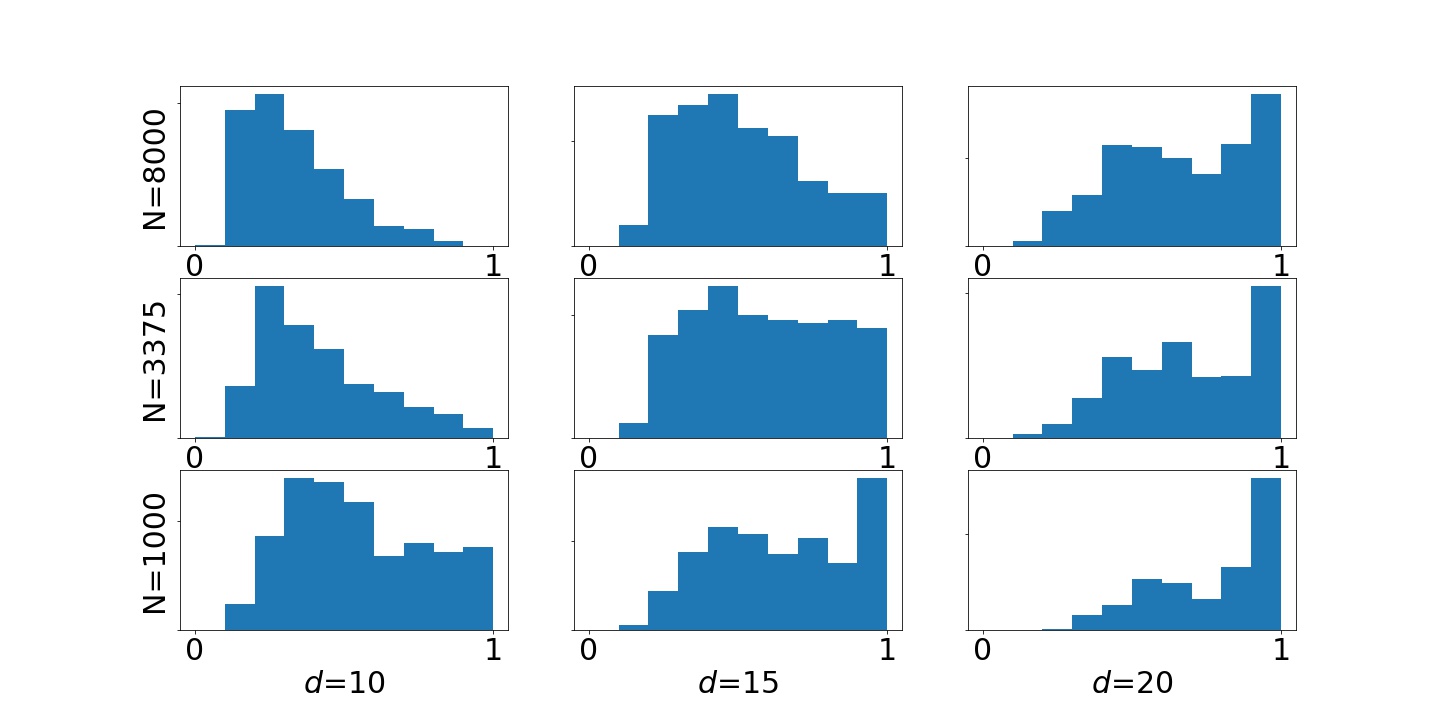}
        \subcaption{$N=d^3$.}
        \label{fig:dim no collapse}
    \end{minipage}%
    \begin{minipage}{0.5\textwidth}
       
        \includegraphics[width=1\linewidth, height=0.21\textheight]{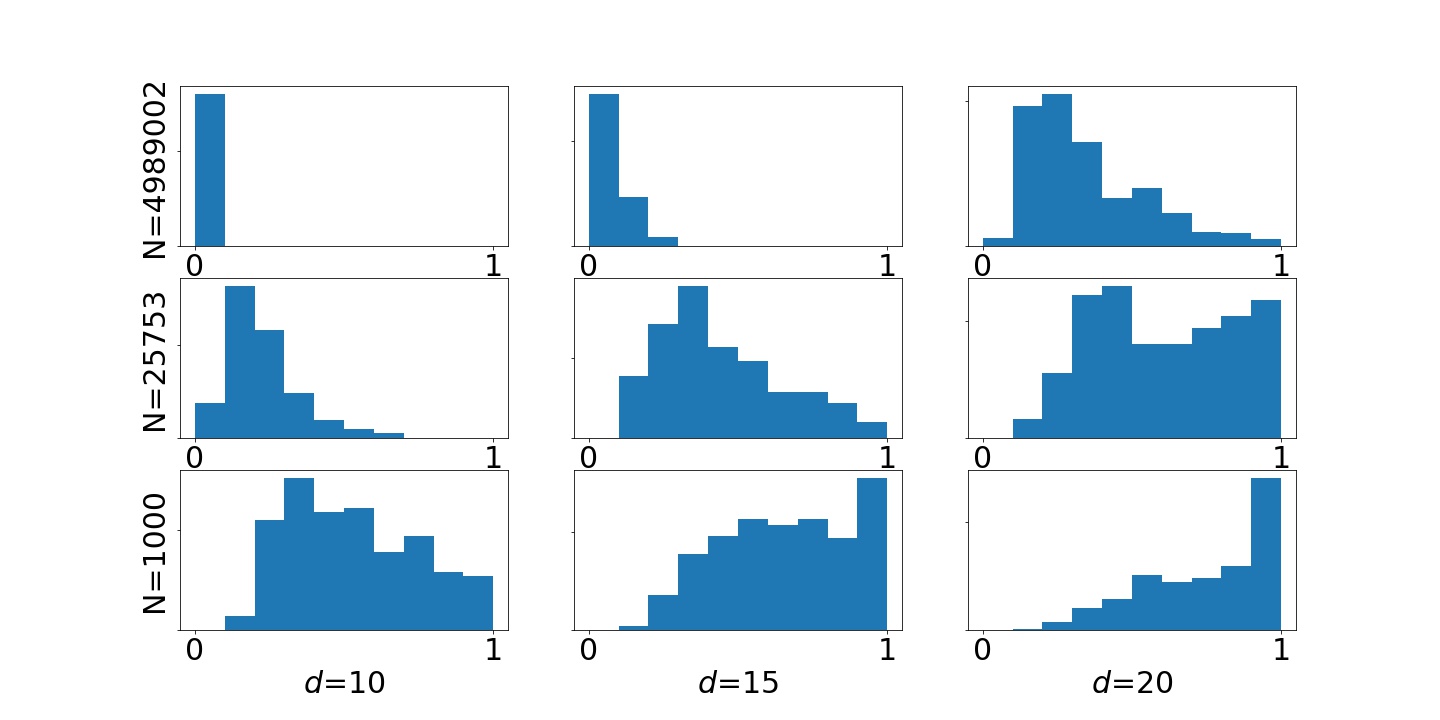}
        \subcaption{$N=\bigO \bigl(\dchi(\mu_{1:d}\|\pi_{1:d}) \bigr)$.}
        \label{fig:dim collapse}
    \end{minipage}
    \caption{Dimensional scaling $\lambda = 2.4.$ \label{fig:Dimensional Scaling lambda=2.4}}
\end{figure}

\end{document}